\documentclass[journal]{IEEEtran}
\usepackage{subfigure,cite,graphicx,amsmath,amssymb,mathrsfs,epsf}
\usepackage{cite,graphicx,epsfig,subfigure,amsmath,amssymb,mathrsfs,epsf,graphics,color}
\usepackage{psfrag}

\newtheorem{theorem}{Theorem}
\newtheorem{lemma}{Lemma}

\newtheorem{corollary}{Corollary}
\newtheorem{remark}{Remark}

\def \E{\operatorname{E}}
\def \P{\operatorname{Pr}}
\def \tr{\operatorname{tr}}

\begin{document}
\title{Capacity Scaling of Wireless Ad Hoc Networks: Shannon Meets Maxwell}
\author{Si-Hyeon~Lee,~\IEEEmembership{Student Member,~IEEE,} and~Sae-Young~Chung,~\IEEEmembership{Senior Member,~IEEE}
\thanks{This work was partially supported by the Defense Acquisition Program Administration and the Agency for Defense Development under the contract UD060048AD. The material in this paper was presented in part at the IEEE International Symposium on Information Theory, Toronto, Canada, July 2008 and at  the IEEE International Symposium on Information Theory, Austin, USA, June 2010.}
\thanks{S.-H. Lee and S.-Y. Chung are with the Department of Electrical Engineering, Korea Advanced Institute of Science and Technology (KAIST), Daejeon, South Korea (e-mail: sihyeon@kaist.ac.kr; sychung@ee.kaist.ac.kr).}
}

\maketitle

\begin{abstract}
In this paper, we characterize the information-theoretic capacity scaling of wireless ad hoc networks with $n$ randomly distributed nodes. By using an exact channel model from Maxwell's equations, we successfully resolve the conflict in the literature between the linear capacity scaling by \"{O}zg\"{u}r et al. and the degrees of freedom limit given as the ratio of the network diameter and the wavelength $\lambda$ by Franceschetti et al. In dense networks where the network area is fixed, the capacity scaling is given as the minimum of $n$ and the degrees of freedom limit $\lambda^{-1}$ to within an arbitrarily small exponent. In extended networks where the network area is linear in $n$, the capacity scaling is given as the minimum of $n$ and the degrees of freedom limit $\sqrt{n}\lambda^{-1}$ to within an arbitrarily small exponent. Hence, we recover the linear capacity scaling by \"{O}zg\"{u}r et al. if $\lambda=O(n^{-1})$ in dense networks and if $\lambda=O(n^{-1/2})$ in extended networks. Otherwise, the capacity scaling is given as the degrees of freedom limit characterized by Franceschetti et al. For achievability, a modified hierarchical cooperation is proposed based on a lower bound on the capacity of multiple-input multiple-output channel between two node clusters using our channel model.
\end{abstract}

\begin{IEEEkeywords}
Capacity scaling, channel correlation, cooperative multiple-input multiple-output (MIMO), degrees of freedom, hierarchical cooperation, Maxwell's equations, physical limit, wireless ad hoc networks.
\end{IEEEkeywords}
\allowdisplaybreaks
\section{Introduction}
\IEEEPARstart{P}{ioneered} by Gupta and Kumar in \cite{GuptaKumar:00}, the capacity scaling in wireless ad hoc networks has been actively studied over the last decade. In this research, we consider $n$ uniformly and independently distributed nodes in a unit area (dense network) or an area of $n$ (extended network), each of which wanting to communicate to a random destination at the same rate of $R(n)$. The goal is to find out the maximally achievable scaling of the aggregate throughput $T(n)=nR(n)$ with $n$. In their seminal paper \cite{GuptaKumar:00}, Gupta and Kumar showed that throughput scaling higher than $O(\sqrt{n})$ cannot be achieved if each node treats interference as noise and that the multihop scheme can achieve $\Theta(\sqrt{n}/\log n)$.\footnote{In this paper, we use the following asymptotic notations \cite{Knuth:76}: (i) $f(n)=O(g(n))$ if $f(n)\leq kg(n)$ as $n$ tends to infinity for some constant $k$. (ii) $f(n)=\Theta(g(n))$ if $k_1g(n)\leq f(n)\leq k_2g(n)$ as $n$ tends to infinity for some constants $k_1$ and $k_2$. (iii) $f(n)=\Omega(g(n))$ if $f(n)\geq kg(n)$ as $n$ tends to infinity for some constant $k$.}
This gap was closed in~\cite{Franceschetti:07}, where it was shown that the multihop via percolation theory can achieve $\Theta(\sqrt{n})$. To information theorists, a natural question is what the \emph{information-theoretic} capacity scaling is without such underlying physical-layer assumptions.

The information-theoretic capacity scaling is highly dependent on the channel model. Furthermore, it is important to use a realistic channel model to get results that are closer to reality. In wireless networks in line-of-sight (LOS) environments, where the spatial locations of nodes are fixed with sufficiently large inter-node separation compared to the wavelength, the baseband-equivalent channel response between two nodes $k$ and $i$ is given as
\begin{align}
\frac{\sqrt{G}}{d_{ik}}\exp\left(-j\frac{2\pi}{\lambda}d_{ik}\right) \label{eqn:maxwell}
\end{align}
from Maxwell's equations where $j=\sqrt{-1}$, $d_{ik}$ is the distance between nodes $k$ and $i$, $\lambda$ denotes the wavelength $\frac{c}{f_c}$ where $c$ is the speed of light and $f_c$ is the carrier frequency, and $G=\frac{\lambda^2 G_l}{16\pi^2}$  by Friis' formula where $G_l$ is the product of the transmit and receive antenna gains.

Recently, \"{O}zg\"{u}r et al. characterized the information-theoretic capacity scaling in \cite{Ozgur:07}. Instead of using the exact channel model (\ref{eqn:maxwell}) with a distance dependent phase, however, they assumed that the baseband-equivalent channel response between two nodes $k$ and $i$ is given as
\begin{align}
\frac{\sqrt{G}}{d_{ik}}\exp\left(j\theta_{ik}\right) \label{eqn:nomaxwell}
\end{align}
where $\theta_{ik}$ is independent and identically distributed (i.i.d.). For this channel model, the capacity scaling is shown to be arbitrarily close to linear in both dense and extended networks, which means that each source can communicate to its destination as if there were no interference. A key component to achieve such a scaling is the cooperative multiple-input multiple-output (MIMO) transmission between two node clusters whose sizes are comparable to that of the network. If the penalty to form such a virtual MIMO is negligible, the classical MIMO results~\cite{Foschini:96,Telatar:99} under the i.i.d. channel phase assumption make the linear throughput scaling possible. Such an overhead is indeed shown to be arbitrarily small by using hierarchical cooperation (HC). In the HC scheme, each cluster forms a virtual antenna array using MIMO transmissions between  small scale clusters inside it. Similarly, each small scale cluster forms a virtual antenna array by MIMO transmissions between even smaller clusters inside it. This builds up a hierarchy and a plain time division multiple access (TDMA) is performed at the bottom hierarchy.

The i.i.d. phase assumption in (\ref{eqn:nomaxwell}) makes the throughput analysis easier in \cite{Ozgur:07}, but such an artificial assumption can lead to results contradicting the physics. Recently, the linear capacity scaling in \cite{Ozgur:07} turned out to be contradictory to the physical limit on degrees of freedom (DoF) when $\lambda$ is not sufficiently small. In \cite{Franceschetti:09}, Franceschetti et al. showed, using Maxwell's equations, that DoF in extended networks is limited by the ratio of the network diameter $\sqrt{n}$ and $\lambda$. By rescaling the network size, the DoF limit becomes $\lambda^{-1}$ in dense networks. This is a fundamental limitation independent of power attenuation and fading models. Hence, the linear capacity scaling in \cite{Ozgur:07} is in fact not attainable for $\lambda=\Omega(n^{-1})$ and  $\lambda=\Omega(n^{-1/2})$ in dense and extended networks, respectively. The cause of such a conflict is the i.i.d. channel phase assumption in \cite{Ozgur:07} that ignores the channel correlation due to the distance dependent channel phase.

Two contradictory results \cite{Ozgur:07,Franceschetti:09} highlight the importance of exact channel models based on Maxwell's equations.  Thus, the ultimate goal would be the characterization of the information-theoretic capacity scaling of wireless ad hoc networks from Maxwell's equations without any artificial assumptions. In this paper, we accomplish this goal by characterizing the information-theoretic capacity scaling of wireless ad hoc networks using an exact channel model from Maxwell's equations in LOS environments. In dense networks, we establish the capacity scaling given as $\min\{n, \lambda^{-1}\}$ to within an arbitrarily small exponent. Hence, the capacity scaling is linear in $n$ if $\lambda=O(n^{-1})$. Otherwise, the capacity scaling is given as the DoF limit $\lambda^{-1}$ characterized by Franceschetti et al. In extended networks, the capacity scaling is given as $\min\{n, \frac{\sqrt{n}}{\lambda}\}$ to within an arbitrarily small exponent. Hence, the capacity scaling is linear in $n$ if $\lambda=O(n^{-1/2})$ and is given as the DoF limit $\frac{\sqrt{n}}{\lambda}$ characterized by Franceschetti et al. otherwise.

Since the converse is straightforward from the previous works in \cite{Ozgur:07,Franceschetti:09}, our main contribution is to show the achievability.  We note that under the far-field assumption, i.e., $\lambda$ is much smaller than the inter-node separation $\sqrt{\frac{A}{n}}$, where $\sqrt{A}$ denotes the network diameter, the DoF limit $\frac{\sqrt{A}}{\lambda}$ is in general higher than the throughput scaling $\sqrt{n}$ of the multihop via percolation theory of \cite{Franceschetti:07}. For achievability, we modify the HC scheme in \cite{Ozgur:07} according to an achievable MIMO rate between two node clusters. We show that the capacity of the MIMO channel between two node clusters is at least proportional to the minimum of the number of nodes in the cluster and the product of the ratio of the cluster diameter and $\lambda$ and the angular spread between clusters. In our modified HC scheme, only a subset of nodes in a cluster performs the MIMO transmission such that the number of participating nodes is proportional to the achievable MIMO rate, whereas all nodes in the cluster participate in the MIMO transmission in the HC scheme of~\cite{Ozgur:07}.

The organization of this paper is as follows. In Section \ref{sec:model}, the system model is presented. In Section \ref{sec:main}, we present the main theorems on the capacity scaling and their implications. In Section \ref{sec:modifiedHC}, a modified HC scheme is constructed according to an achievable MIMO rate between node clusters. We conclude this paper in Section \ref{sec:conclusion}.

The following notations will be used in the paper. $\mathcal{CN}(0,K)$ denotes the circularly symmetric complex Gaussian random vector with zero mean and covariance matrix of $K$. $\mathbb{R}$ and $\mathbb{N}$ denote the set of real numbers and the set of natural numbers, respectively.
$\E[\cdot]$ and $(\cdot)^*$ denote the expectation and conjugate transpose, respectively. $(\cdot)_m$ denotes the modulo-$m$ operation.
$(x)^+$ denotes the positive part of $x$, i.e.,
\begin{align*}
(x)^+=\begin{cases} x &\mbox{ if } x\geq 0\\
0 &\mbox{ if } x<0
\end{cases}.
\end{align*}
For two integers $u$ and $v$ such that $u\leq v$, $[u:v]$ denotes the set $\{u,u+1, \ldots, v\}$. For a set $\mathcal{S}$, $|\mathcal{S}|$ denotes the cardinality of the set. The logarithm function $\log$ is base 2 unless otherwise specified.

\section{System model} \label{sec:model}
There are $n$ uniformly and independently distributed nodes in a square of unit area (called a dense network) or a square of area $n$ (called an extended network). It is assumed that the node locations are fixed for the duration of the communication. Each node has an average transmit power constraint of $P$ and the network is allocated a total bandwidth $B$ around the carrier frequency $f_c\gg B$. The wavelength $\lambda=\frac{c}{f_c}$ is assumed to be much smaller than the average separation distance between neighbor nodes given as $\Theta(n^{-1/2})$ and $\Theta(1)$ for dense and extended networks, respectively. Furthermore, we assume a very mild lower bound on $\lambda$ such that $\lambda\geq n^{-\mu}$ for an arbitrarily large constant $\mu > 1/2$. We assume that $\lambda$ is a monotonically non-increasing function of $n$. This corresponds to using higher carrier frequencies to handle more traffic due to the increased number of nodes.
Every node is a source and a destination simultaneously, and the $n$ source--destination pairs are determined randomly. Every source wants to communicate to its destination at the same rate of $R(n,\lambda)$. The aggregate throughput $T(n,\lambda)$ of the network is given as $nR(n,\lambda)$.

We consider the LOS environment, i.e., no multi-path fading.\footnote{Our analysis can be extended to cases where there is multi-path fading. However, we believe that having a finite number of paths would not affect the throughput scaling laws.} From Maxwell's equations in far-fields, the discrete-time baseband-equivalent channel gain between nodes $k$ and $i$ at time $m$ is given as
\begin{align}
H_{ik}[m]&=\frac{\sqrt{G}}{d_{ik}[m]}\exp\left(-j\frac{2\pi}{\lambda}d_{ik}[m]\right)
\label{eqn:H_su}
\end{align}
where $j=\sqrt{-1}$, $d_{ik}[m]$ is the distance between nodes $k$ and $i$ at time $m$, and $G=\frac{\lambda^2 G_l}{16\pi^2}$  by Friis' formula, where $G_l$ is the product of the transmit and receive antenna gains.\footnote{A channel model with a path-loss exponent larger than two is considered in Appendix \ref{appendix:path_loss_two}.} Note that if $G_l$ is fixed, $G$ vanishes as $\lambda$ tends to zero. In extended networks, however, we assume that $G$ is a constant since we can increase $G_l$ proportional to $\lambda^{-2}$ without increasing the physical size of the antennas beyond a small fraction of the inter-node separation.\footnote{For each node, we can deploy $\Theta(\lambda^{-1})$ antennas vertically that form an antenna array of length $\Theta(1)$, which gives a vertical beamforming gain of $\Theta(\lambda^{-1})$. Hence, the product of the transmit and receive beamforming gains can be $\Theta(\lambda^{-2})$.} In dense networks, it is proper to assume that the node size is upper-bounded by $kn^{-1/2}$ for some constant $k$ since the network area is now fixed. Hence,  $G$ is assumed to be $\Theta(n^{-1})$ for dense networks because we can make $G_l$ proportional to $\lambda^{-2}n^{-1}$.\footnote{In dense networks, we can  vertically  deploy $\Theta(\lambda^{-1}n^{-1/2})$ antennas for each node that form an antenna array of length $\Theta(n^{-1/2})$.}

The discrete-time baseband-equivalent output $Y_i[m]$ at node $i$ at time $m$ is given as
\begin{equation*}
Y_i[m]=\sum^{n}_{k=1} H_{ik}[m]X_k[m] +Z_i[m]
\end{equation*}
where $X_k[m]$ is the discrete-time baseband-equivalent input at node $k$ at time $m$ and $Z_i[m]$ is the additive Gaussian noise $\mathcal{CN}(0,1)$ at node $i$ at time $m$. The channel state information (CSI) is available only at the receivers. From now on, we will omit the time index for notational convenience.

\section{Main result} \label{sec:main}
We first present a lower and an upper bound on the capacity scaling for dense networks in Theorems \ref{thm:dense} and \ref{thm:dense_ub}, respectively. In Theorems \ref{thm:extended} and \ref{thm:extended_ub}, we present a lower and an upper bound on the capacity scaling for extended networks, respectively.\footnote{We note that a similar result was also independently shown in~\cite{OLT:10} based on the same channel model as in~\cite{Lee:08} at the same time this paper was submitted. In this paper, we derive a lower bound on the MIMO transmission between two node clusters without any artificial assumptions, which is the key ingredient in the achievability,  whereas the work in \cite{OLT:10} assumed that interfering signals from other transmitting nodes in the network to the MIMO transmission are independent. In addition, the effect of $\lambda$ on $G$ is considered in this paper, but not in \cite{OLT:10}. }
\begin{theorem} \label{thm:dense}
Consider a network of $n$ nodes on a unit area, in which $n$ source--destination pairs are assigned arbitrarily. For any $\epsilon>0$, a scheme exists that achieves an aggregate throughput
\begin{align*}
T(n, \lambda)\geq K_{1\epsilon}\min\left\{\lambda^{-1},n\right\}^{1-\epsilon}
\end{align*}
with high probability,\footnote{With probability approaching 1 as $n$ tends to infinity.} where $K_{1\epsilon}$ is a positive constant independent of both $n$ and $\lambda$.
\end{theorem}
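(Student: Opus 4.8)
The plan is to prove this achievability result by constructing a modified hierarchical cooperation (HC) scheme and analyzing its aggregate throughput, with everything resting on a sharp lower bound for the capacity of the cooperative MIMO channel between two node clusters under the exact channel model~(\ref{eqn:H_su}). The guiding intuition is that a single long-range MIMO transmission between the two halves of the network already exhibits the target behavior: two clusters of $\Theta(n)$ nodes spanning a region of diameter $\Theta(1)$ see an angular spread $\Theta(1)$, so the available spatial degrees of freedom are $\min\{n,\lambda^{-1}\}$. The role of the hierarchy is then purely to reduce to a negligible factor the overhead of assembling and disassembling these virtual arrays.

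First I would establish the MIMO lower bound: for two clusters of $M$ nodes with cluster diameter $d_c$ and mutual angular spread $\Delta\theta$, the MIMO capacity is at least proportional to $\min\{M,\, d_c\Delta\theta/\lambda\}$. I expect this to be \emph{the} main obstacle, because under~(\ref{eqn:H_su}) the channel entries $\exp(-j\frac{2\pi}{\lambda}d_{ik})$ are deterministic and strongly correlated through the geometry, so the i.i.d.\ random-matrix machinery used in~\cite{Ozgur:07,Telatar:99,Foschini:96} is unavailable. Instead I would expand $d_{ik}$ in the far field into a leading plane-wave term plus curvature (Fresnel) corrections; the number of resolvable beam directions, each of angular width $\Theta(\lambda/d_c)$, that fit inside the angular spread $\Delta\theta$ is $\Theta(d_c\Delta\theta/\lambda)$, and this is the effective rank. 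To turn this into a capacity lower bound it suffices to exhibit that many singular values of the channel matrix bounded away from zero, which I would do by selecting a well-separated subset of nodes and lower-bounding the smallest singular value of the associated array-response (Vandermonde-type) submatrix.

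With this bound available, I would construct the HC scheme recursively as in~\cite{Ozgur:07}: partition the unit square into clusters, run a setup phase that spreads each source's bits across its cluster, a cooperative-MIMO phase between source and destination clusters, and a collection phase at the destinations, with the setup and collection phases implemented by recursing the same scheme at a smaller scale and plain TDMA at the bottom level. The essential modification is that, since the inter-cluster MIMO rate saturates at $d_c\Delta\theta/\lambda$ once $\lambda=\Omega(n^{-1})$, only a subset of each cluster whose size is proportional to this achievable rate participates in the long-range transmission; this matches the per-pair load carried by the setup and collection phases to what the MIMO phase can deliver, eliminating the wasted overhead that a verbatim application of~\cite{Ozgur:07} would suffer outside the $\lambda=O(n^{-1})$ regime.

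Finally, I would track the throughput through the levels by writing a recursion relating the rate at scale $n$ to the rate at the sub-cluster scale, charging the time split among the three phases at each level. Letting the number of hierarchy levels grow slowly with $n$ drives the accumulated overhead to a factor $n^{o(1)}$, which is precisely what yields the arbitrarily small exponent loss $\epsilon$ in the statement. Throughout, I would invoke standard concentration for $n$ uniform points to guarantee, with high probability, that every cluster used by the scheme contains $\Theta(M)$ nodes and that the relevant inter-cluster distances and angular spreads attain their nominal orders simultaneously, so that the deterministic MIMO bound can be applied to all cluster pairs at once.
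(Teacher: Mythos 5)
Your overall architecture matches the paper's: a cluster-to-cluster MIMO lower bound of order $\min\{N, D^2/(\lambda L)\}$, a modified HC scheme in which only a subset of nodes of size proportional to that bound participates in the long-range transmissions, and a phase-accounting recursion whose overhead is driven into the exponent $\epsilon$. However, there are two genuine gaps. First, your route to the key MIMO lemma is unsubstantiated at exactly the hard step. You propose to exhibit $\Theta(\min\{N,M\})$ singular values bounded away from zero by selecting a well-separated node subset and lower-bounding the smallest singular value of a ``Vandermonde-type'' array-response submatrix. But in this geometry the phases are $\exp(-j\frac{2\pi}{\lambda}d_{ik})$ with full Euclidean distances: precisely in the regime $D^2\gg\lambda L$ where $M\gg 1$, the Fresnel curvature terms (which you yourself note) destroy the Vandermonde structure, and no known conditioning estimate for such 2D random geometric matrices is cited or sketched. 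The paper deliberately avoids any per-singular-value control: it only needs the two traces $\tr(FF^*)=\Theta(N^2)$ and $\tr(FF^*FF^*)=O(\max\{N^2,N^3M^{-1}\})$, the latter obtained by bounding the expected oscillatory cross term $\E[Q_{1212}]=O(M^{-1})$ (Lemma \ref{lemma:mu}) and invoking a strong law of large numbers for weakly correlated variables (Lemma \ref{lemma:converge}); the Paley--Zygmund inequality then converts these two moments into the capacity bound. Your plan would need an entirely different, and substantially harder, spectral argument to be completed.

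Second, you prove (or rather, state) the MIMO bound only for an interference-free channel, but the HC recursion cannot close with that version. Phases 1 and 3 use 9-TDMA spatial reuse, so the Phase-2 MIMO transmissions at level $k$ run simultaneously with transmissions in other clusters of $n_k$ nodes, and each MIMO receiver sees external interference whose covariance is \emph{correlated across antennas} because it propagates through the same deterministic LOS channels. Folding this interference into noise with a constant power penalty, or assuming it independent across receive nodes, is exactly the artificial assumption the paper is written to avoid (it is the stated distinction from \cite{OLT:10}). This is why Theorem \ref{theorem:MIMO} is proved for arbitrary interference covariance $\Sigma$, quantified through $\rho_1=\frac{L^2}{NGP}\tr(\Sigma)$ and $\rho_2=\frac{L^4}{N(GP)^2}\tr(\Sigma^2)$, and why Lemma \ref{lemma:Gk} must verify, again by trace estimates on the interfering channel matrices, that $\rho_1=O(N\log n)$ and $\rho_2=O((\log n)^2\max\{N^2,N^3M^{-1}\})$, so that Corollary \ref{corollary:mimo} yields a rate loss of only $\mbox{polylog}(n)$. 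Without this component your recursion either loses the spatial reuse (and with it the claimed scaling) or rests on an unproved independence assumption.
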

The aggregate throughput scaling in Theorem~\ref{thm:dense} can be achieved by the modified HC scheme constructed in Section~\ref{sec:modifiedHC}. Note that Theorem \ref{thm:dense} holds even if source--destination pairing is arbitrary.

In the following theorem, we show an upper bound on the throughput scaling. If the source--destination pairs can be determined according to the node locations, then an aggregate throughput scaling of $\Theta(n)$ would be achievable for any $\lambda$ by letting each of the source--destination pairs be nearest neighbors. Therefore, for the upper bound on the capacity scaling, we limit our interest to random source--destination pairing.

\begin{theorem}\label{thm:dense_ub}
Consider a network of $n$ nodes on a unit area, in which $n$ source--destination pairs are assigned randomly. The aggregate throughput in the network is upper-bounded as
\begin{align}
T(n, \lambda)\leq K_{2}\min\left\{\lambda^{-1}(\log\lambda^{-2})^2,n\log n\right\} \label{eqn:main_dense_ub}
\end{align}
with high probability,  where $K_2$ is a positive constant independent of both $n$ and $\lambda$.
\end{theorem}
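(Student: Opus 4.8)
The plan is to establish the two terms of the minimum in (\ref{eqn:main_dense_ub}) separately by a cut-set argument, after which the bound $T\leq K_2\min\{\cdot,\cdot\}$ is immediate. First I would partition the unit square by a vertical line into a left half $L$ and a right half $R$, each containing $\Theta(n)$ nodes with high probability. Under random source--destination pairing, a constant fraction of the $n$ pairs have their source in $L$ and destination in $R$, and by a standard concentration argument the number of such straddling pairs is $\Theta(n)$ with high probability. The total rate of these pairs cannot exceed the capacity $C_{\mathrm{cut}}$ of the MIMO channel from the transmitters in $L$ to the receivers in $R$, so $T(n,\lambda)=O(C_{\mathrm{cut}})$, and it remains to bound $C_{\mathrm{cut}}$ in two ways.

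For the $n\log n$ term I would use the genie-aided bound $C_{\mathrm{cut}}\leq\log\det(I+P\,HH^*)$, where $H$ is the channel matrix across the cut and each transmitter obeys the power constraint $P$. Writing this as $\sum_i\log(1+P\sigma_i^2)$ in terms of the singular values $\sigma_i$ of $H$ and applying Jensen's inequality gives $C_{\mathrm{cut}}\leq|R|\log\bigl(1+\tfrac{P}{|R|}\tr(HH^*)\bigr)$. Since $\tr(HH^*)=\sum_{i,k}G/d_{ik}^2$ is polynomially bounded in $n$ with high probability (using $G=\Theta(n^{-1})$, $|R|=\Theta(n)$, and the fact that the minimum pairwise distance is bounded below by a polynomial in $n^{-1}$ with high probability), the logarithm is $O(\log n)$ and hence $C_{\mathrm{cut}}=O(n\log n)$. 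This reproduces the converse of \"{O}zg\"{u}r et al.

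For the degrees-of-freedom term the key point is that most of the singular values $\sigma_i$ are in fact negligible. Following Franceschetti et al., the channel operator induced by (\ref{eqn:H_su}) between two regions of diameter $\Theta(1)$ has only $O(\lambda^{-1}\log\lambda^{-1})$ singular values above any fixed polynomial-in-$\lambda^{-1}$ threshold, with the remaining singular values decaying super-polynomially. I would therefore split $\sum_i\log(1+P\sigma_i^2)$ into the significant modes and a tail: the significant part comprises $O(\lambda^{-1}\log\lambda^{-1})$ terms, and because the operating regime $n^{1/2}\ll\lambda^{-1}\leq n^{\mu}$ forces $\log\lambda^{-1}=\Theta(\log n)$, the polynomial bound on $\sigma_{\max}^2$ makes each term $O(\log\lambda^{-1})$, yielding $O(\lambda^{-1}(\log\lambda^{-1})^2)=O(\lambda^{-1}(\log\lambda^{-2})^2)$; the tail is $o(1)$ by the super-polynomial decay. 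Combining the two estimates gives $T(n,\lambda)=O(\min\{n\log n,\ \lambda^{-1}(\log\lambda^{-2})^2\})$, as claimed.

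The main obstacle is the singular-value decay estimate driving the second bound, namely quantifying the effective rank of the Maxwell channel matrix and the decay rate of its tail eigenvalues. This is precisely the spatial degrees-of-freedom limit established by Franceschetti et al., so rather than re-derive it I would invoke their result directly; the delicate points are then to track the two logarithmic factors (the source of $(\log\lambda^{-2})^2$ rather than a single $\log$, arising from both the count of significant modes and the per-mode SNR), to use $\log\lambda^{-1}=\Theta(\log n)$ in order to eliminate the residual $n$-dependence, and to verify that the $G=\Theta(n^{-1})$ scaling together with the polynomial bounds on $\sigma_{\max}^2$ and $\tr(HH^*)$ are consistent with their hypotheses in the unit-area setting.
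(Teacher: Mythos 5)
Your proposal is correct and, at its core, leans on the same two external results as the paper, but the packaging differs enough to be worth comparing. The paper's ``proof'' of this theorem is essentially two citations: the term $\lambda^{-1}(\log\lambda^{-2})^2$ is the DoF limit of Franceschetti et al.~\cite{Franceschetti:09} (derived there for extended networks and transferred to the dense case by rescaling the network size), and the term $n\log n$ comes from the per-source SIMO bound of Theorem 3.1 in~\cite{Ozgur:07}: each source's rate is at most the capacity of the SIMO channel from that source to all $n-1$ remaining nodes, which is $O(\log n)$ since the total received power is polynomial in $n$, and summing over the $n$ sources gives $n\log n$. You instead run both bounds through a single vertical cut: for the $n\log n$ term you bound the MIMO cut capacity via Jensen and a trace estimate, and for the DoF term you invoke Franceschetti et al.'s effective-rank estimates for the cut channel operator and split the log-det into significant modes plus a tail. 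Both routes are valid. Your unified cut has the merit of making explicit where random pairing enters (the $\Theta(n)$ straddling pairs---without which, as the paper notes, linear throughput is trivially achievable by nearest-neighbor pairing), whereas the paper leaves this implicit in the citation; on the other hand, the paper's SIMO route for the $n\log n$ term is more elementary, requiring no cut or concentration argument at all. Two small cautions on your side: the cut-set bound should read $\max_{Q:\,Q_{ii}\le P}\log\det(I+HQH^*)\le\log\det(I+nP\,HH^*)$ rather than $\log\det(I+P\,HH^*)$ (immaterial for scaling, since the extra factor $n$ sits inside a logarithm), and your paraphrase of~\cite{Franceschetti:09}---``$O(\lambda^{-1}\log\lambda^{-1})$ significant singular values with super-polynomially decaying tail''---is a reconstruction of their analysis rather than their stated theorem, which is already packaged as a capacity bound across a cut carrying both logarithmic factors; citing it in that form, as the paper does, spares you from having to re-verify the mode-count and tail-decay claims.
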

The first term in the minimum in (\ref{eqn:main_dense_ub}) is the DoF limit shown in \cite{Franceschetti:09}.\footnote{ In \cite{Franceschetti:09}, the DoF limit in extended networks is studied and it is shown to be determined by the ratio of the network diameter and the wavelength. In  dense networks, the DoF limit can be obtained by rescaling the network size.} The second term in the minimum in (\ref{eqn:main_dense_ub}) is obtained from the fact that the transmission rate from a source to its destination is upper-bounded by the capacity of the single-input multiple-output (SIMO) channel between the source and the remaining nodes in the network (see, e.g., Theorem 3.1 in~\cite{Ozgur:07}).

Theorems \ref{thm:dense} and \ref{thm:dense_ub} establish the capacity scaling in dense networks to within an arbitrarily small exponent. To see the effect of $\lambda$ on the capacity scaling, let $\lambda=n^{-\beta}$ for $\beta\geq \frac{1}{2}$. Note that the condition $\beta\geq \frac{1}{2}$ is needed for the far-field approximation to hold. If $\beta\geq 1$, the capacity scaling is arbitrarily close to linear. If $\frac{1}{2}\leq\beta<1$, the capacity scaling is given as the DoF limit.

Now, we give an achievable aggregate throughput scaling in extended networks.
\begin{theorem} \label{thm:extended}
Consider a network of $n$ nodes on an area $n$, in which $n$ source--destination pairs are assigned arbitrarily. For any $\epsilon>0$, a scheme exists that achieves an aggregate throughput
\begin{align*}
T(n, \lambda)\geq K_{3\epsilon}\min\left\{\sqrt{n}\lambda^{-1},n\right\}^{1-\epsilon}
\end{align*}
with high probability, where $K_{3\epsilon}$ is a positive constant independent of both $n$ and $\lambda$.
\end{theorem}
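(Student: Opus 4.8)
The plan is to avoid re-running the hierarchical-cooperation analysis and instead reduce the extended-network claim to the already-established dense-network result of Theorem~\ref{thm:dense} by a spatial rescaling. The key observation is that the channel model \eqref{eqn:H_su} transforms cleanly under a change of the distance scale: only the dimensionless ratios $d_{ik}/\sqrt{\text{area}}$ and $d_{ik}/\lambda$ enter the physics, so an extended network with wavelength $\lambda$ should be indistinguishable from a dense network with a suitably shrunken effective wavelength.

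First I would relabel node coordinates so that each physical distance $d_{ik}$ on the area-$n$ square becomes $\tilde d_{ik}=d_{ik}/\sqrt{n}$, placing the nodes uniformly on a unit square. Substituting $d_{ik}=\sqrt{n}\,\tilde d_{ik}$ into \eqref{eqn:H_su} gives
\begin{align*}
H_{ik}=\frac{\sqrt{G}}{\sqrt{n}\,\tilde d_{ik}}\exp\left(-j\frac{2\pi}{\lambda}\sqrt{n}\,\tilde d_{ik}\right)=\frac{\sqrt{G/n}}{\tilde d_{ik}}\exp\left(-j\frac{2\pi}{\lambda/\sqrt{n}}\,\tilde d_{ik}\right),
\end{align*}
which is exactly the dense-network channel with effective gain $\tilde G=G/n$ and effective wavelength $\tilde\lambda=\lambda/\sqrt{n}$. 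Since $G=\Theta(1)$ in the extended network, $\tilde G=\Theta(n^{-1})$, matching the gain assumed in the dense setting, and the rescaled nearest-neighbor distance is $\Theta(n^{-1/2})$, again matching. The per-node power $P$, the noise $\mathcal{CN}(0,1)$, and the arbitrary source--destination pairing are untouched by a mere coordinate relabeling, so the extended problem with parameter $\lambda$ and the dense problem with parameters $(\tilde G,\tilde\lambda)$ are the \emph{same} communication problem, and any scheme achieves the same throughput in both.

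Next I would invoke Theorem~\ref{thm:dense} with $\tilde\lambda$ in place of $\lambda$, yielding
\begin{align*}
T(n,\lambda)\geq K_{1\epsilon}\min\left\{\tilde\lambda^{-1},n\right\}^{1-\epsilon}=K_{1\epsilon}\min\left\{\sqrt{n}\,\lambda^{-1},n\right\}^{1-\epsilon},
\end{align*}
which is the desired bound with $K_{3\epsilon}=K_{1\epsilon}$. The step requiring care is checking that $\tilde\lambda$ lies in the regime where Theorem~\ref{thm:dense} is valid: the dense far-field condition $\tilde\lambda\ll n^{-1/2}$ reads $\lambda\ll 1$, precisely the extended-network far-field assumption (inter-node separation $\Theta(1)$); and the lower bound $\lambda\geq n^{-\mu}$ gives $\tilde\lambda\geq n^{-(\mu+1/2)}$, still of the required polynomial form with a constant $\mu+1/2>1$.

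I expect the main obstacle to be confirming that this reduction is genuinely lossless, i.e. that nothing in the modified-HC construction underlying Theorem~\ref{thm:dense} secretly depends on the absolute distance scale rather than on the dimensionless ratios the rescaling preserves. Should that fail, the fallback is a direct argument paralleling the dense case: build the same modified HC hierarchy on the area-$n$ square and track the per-level rate through the cluster-to-cluster capacity lower bound $\min\{M,(D/\lambda)\,\Delta\theta\}$ developed in Section~\ref{sec:modifiedHC}, where now the top-level cluster diameter is $D=\Theta(\sqrt{n})$ and the angular spread between adjacent half-networks is $\Theta(1)$, so the top-level cut delivers $\min\{n,\sqrt{n}\lambda^{-1}\}$ while the recursion absorbs the cooperation overhead into the $n^{-\epsilon}$ factor exactly as in the dense case. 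The delicate point along that route is the same MIMO lower bound together with the accounting of the recursion overhead, which is where the genuine content of the paper resides.
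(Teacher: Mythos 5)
Your proposal is correct and is essentially identical to the paper's own proof: the paper proves Theorem~\ref{thm:extended} by exactly this rescaling, viewing the extended network as an equivalent dense network on a unit area with wavelength $\lambda n^{-1/2}$ (and matching gain $\Theta(n^{-1})$), then invoking the dense-network result. Your additional checks of the far-field condition and the $\lambda\geq n^{-\mu}$ regime under rescaling are sound and merely make explicit what the paper leaves implicit.
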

The aggregate throughput scaling in Theorem~\ref{thm:extended} can be achieved by the modified HC scheme in Section~\ref{sec:modifiedHC}.

For random source--destination pairing, the following theorem shows an upper bound on the capacity scaling  whose exponent is arbitrarily close to that of the lower bound in Theorem~\ref{thm:extended}.
\begin{theorem}\label{thm:extended_ub}
Consider a network of $n$ nodes on an area $n$, in which $n$ source--destination pairs are assigned randomly. The aggregate throughput in the network is upper-bounded as
\begin{align}
T(n, \lambda)\leq
K_{4}\min\left\{\sqrt{n}\lambda^{-1}(\log (n\lambda^{-2}))^2,n\log n\right\}
\label{eqn:main_ext_ub}
\end{align}
with high probability,  where $K_4$ is a positive constant independent of both $n$ and $\lambda$.
\end{theorem}
The first term in the minimum in (\ref{eqn:main_ext_ub}) is the DoF limit shown in \cite{Franceschetti:09}, and the second term in the minimum in (\ref{eqn:main_ext_ub}) is obtained similarly as the derivation of the second term in the minimum in (\ref{eqn:main_dense_ub}).

Similarly as in dense networks, let $\lambda=n^{-\beta}$ for $\beta\geq 0$ to see how $\lambda$ affects the capacity scaling in extended networks. Note that $\beta=0$ means that $\lambda$ is a constant, regardless of $n$.
If $\beta\geq \frac{1}{2}$, the capacity scaling is arbitrarily close to \emph{linear}. If $0\leq \beta<\frac{1}{2}$, the capacity scaling is given as the DoF limit.

\begin{remark}
The exponent $\beta$ signifies the increase of $f_c$ to handle more traffic as $n$ increases. For example, consider a network with an area of $0.01\mbox{km}^2$ with $n=100$ and $f_c=300\mbox{MHz}$ ($\lambda=1\mbox{m}$). Then, the DoF limit is an order of 100, and hence, the network is not DoF limited. Now, assume that the network  size grows to an area of $1\mbox{km}^2$ with $n=10000$.  If $\beta=0$, i.e., the carrier frequency remains the same, then the network becomes DoF limited since the DoF limit is an order of 1000.  Now, if $\beta=1$, i.e., the carrier frequency is increased to $3\mbox{GHz}$ ($\lambda=0.1\mbox{m}$), then the network is not DoF limited since the DoF limit is now an order of 10000.
\end{remark}

\section{Modified Hierarchical Cooperation}\label{sec:modifiedHC}
In this section, Theorems \ref{thm:dense} and \ref{thm:extended} are proved by constructing a modified HC scheme.
Let us first consider a cooperative MIMO between two node clusters, which is the key to the construction of the modified HC scheme. Consider $N$ independently and uniformly distributed nodes in each of two horizontally aligned square areas with side length $D$ and distance $L\geq 2D$ between the centers, as shown in Fig. \ref{fig:MIMO}. Let $C_T$ and $C_R$ denote the left and right clusters of $N$ nodes in  Fig. \ref{fig:MIMO}, respectively.
The $N$-by-$N$ cooperative MIMO channel from $C_T$ to $C_R$ is given as
\begin{align}
Y=HX+W+Z \label{eqn:mimo_ch}
\end{align}
where $Y$ is the $N$-by-1 received vector at $C_R$, $H$ is the $N$-by-$N$ channel matrix from (\ref{eqn:H_su}), $X$ is the $N$-by-$1$ transmitted vector from $C_T$, $W$ is the $N$-by-1 external interference vector with covariance matrix $\Sigma$, and $Z$ is the $N$-by-1 additive Gaussian noise vector $\mathcal{CN}(0,I)$. Let $\rho_1\triangleq \frac{L^2}{NGP}\tr(\Sigma)$ and $\rho_2\triangleq\frac{L^4}{N(GP)^2}\tr(\Sigma^2)$. The following theorem presents an achievable MIMO rate from $C_T$ to $C_R$.\footnote{A more general version of Theorem~\ref{theorem:MIMO} was shown previously in Theorem 1 in~\cite{Lee:08}, where multiple antennas per node were assumed. $M$ in Theorem~\ref{theorem:MIMO} corresponds to the fourth term in the minimum of (4) in~\cite{Lee:08}, which was obtained based on an approximation and, therefore, differs slightly from $M$. In deriving Theorem~\ref{theorem:MIMO}, however, no approximation is used, and therefore, the result is now exact.} \footnote{For the simplicity of presentation, $C_T$ and $C_R$ are assumed to be horizontally aligned. However, the proof of Theorem \ref{theorem:MIMO} in Appendix \ref{appendix:MIMO} can be easily extended to cases where $C_T$ and $C_R$ are not horizontally aligned, which will result in the same conclusion as in Theorem  \ref{theorem:MIMO}.} The proof is in Appendix \ref{appendix:MIMO}.

\begin{figure}[t]
 \centering
{
\psfrag{t}[c]{$C_T$}
\psfrag{r}[c]{$C_R$}
\psfrag{l}[c]{$L$}
\psfrag{d}[c]{$D$}
  \includegraphics[width=85mm]{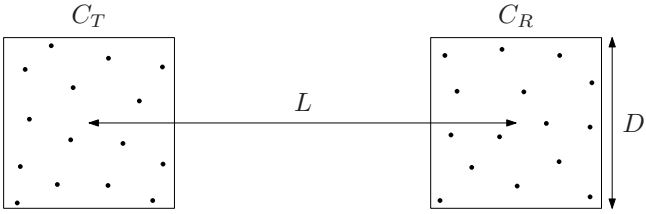}}
\caption{Cooperative MIMO between $C_T$ and $C_R$ with side length $D$ and distance $L\geq 2D$ between the centers } \label{fig:MIMO}
\end{figure}

\begin{theorem}
The capacity $C(H)$ of the cooperative MIMO channel from $C_T$ to $C_R$  is lower-bounded as
\begin{align*}
C(H)\geq N\frac{\delta^2(\rho_1+K_1'N)^2}{(\rho_2^{1/2}+(K_2'\max\{N^2,N^3M^{-1}\})^{1/2})^2}\cr
\times \log \left(1+\frac{\frac{GP}{L^2}((1-\delta)K_1'N-\delta\rho_1)}{1+\frac{GP}{L^2}\rho_1}\right)
\end{align*}
for any $0\leq \delta\leq 1$ with high probability as $N$ tends to infinity, where $K_1'$ and $K_2'$ are positive constants independent of $D$, $L$, $N$, and $\lambda$ and $M$ is given as
\begin{align}
M=\max\left\{1,\frac{D^2}{\lambda L} \left(1+\left(\log \frac{D^2}{\lambda L}\right)^+\right)^{-1}\right\}.\label{eqn:M}
\end{align}\label{theorem:MIMO}
\end{theorem}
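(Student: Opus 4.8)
\emph{Plan.} My strategy is to reduce the capacity to the first two spectral moments of the received covariance under an isotropic input, and then to convert these moments into a (number of usable modes)$\,\times\log(1+\SINR)$ bound by an elementary eigenvalue-counting argument; the geometry of the distance-dependent phases will enter only through the second moment, which is where the quantity $M$ is born. Write $B\triangleq PHH^{*}+\Sigma$ for the ``excess'' received covariance when each transmitter sends independent $\mathcal{CN}(0,P)$ symbols, and let $\nu_1\ge\cdots\ge\nu_N\ge0$ be its eigenvalues, so that $\tr B=\sum_i\nu_i$ and $\|B\|_F^2=\sum_i\nu_i^2$. Treating $W+Z$ as Gaussian noise of covariance $\Sigma+I$, I would lower-bound $C(H)$ by the rate of a scheme that activates only the receive modes of $B$ whose eigenvalue exceeds a threshold $\tau=(1-\delta)\,\tr(B)/N$, treating the rest as interference.

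By Cauchy--Schwarz, the number of modes with $\nu_i>\tau$ is at least $(\tr B-N\tau)^2/\|B\|_F^2=\delta^2(\tr B)^2/\|B\|_F^2$, which will supply the $\delta^2$ prefactor and the ratio of squared moments. Each such mode contributes at least $\log(1+\SINR)$, where the SINR credits the mode with its eigenvalue (at least $\tau$), debits the per-dimension interference $\tr(\Sigma)/N$, and is normalized by the floor $1+\tr(\Sigma)/N$; with $\tr(B)/N\ge\frac{GP}{L^2}(K_1'N+\rho_1)$ and $\tr(\Sigma)/N=\frac{GP}{L^2}\rho_1$, the single parameter $\tau$ produces exactly the $(1-\delta)K_1'N-\delta\rho_1$ numerator and the $1+\frac{GP}{L^2}\rho_1$ denominator of the theorem. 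Thus the entire problem is reduced to a lower bound on $\tr B$ and an upper bound on $\|B\|_F$.

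For the moments I would pass to $\tilde H\triangleq(L/\sqrt G)H$, whose entries have unit modulus up to the bounded factor $L/d_{ik}\in[L/(L+D),\,L/(L-D)]$. Linearity gives $\tfrac{L^2}{GP}\tr B=\tr(\tilde H\tilde H^{*})+N\rho_1$, and $\tr(\tilde H\tilde H^{*})=\sum_{i,k}L^2/d_{ik}^2$ concentrates around $\Theta(N^2)$, yielding the $\rho_1+K_1'N$ factor. The triangle inequality for the Frobenius norm gives $\tfrac{L^2}{GP}\|B\|_F\le\|\tilde H\tilde H^{*}\|_F+\sqrt{N\rho_2}$, which is precisely the $\rho_2^{1/2}+(\cdots)^{1/2}$ structure once $\|\tilde H\tilde H^{*}\|_F$ is controlled.

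The main obstacle is the fourth moment and the emergence of $M$. The decisive estimate is $\|\tilde H\tilde H^{*}\|_F^2=\sum_{i,j}\bigl|\sum_k\tilde H_{ik}\overline{\tilde H_{jk}}\bigr|^2$, whose inner sum $\sum_k\tfrac{L^2}{d_{ik}d_{jk}}\exp\!\bigl(-j\tfrac{2\pi}{\lambda}(d_{ik}-d_{jk})\bigr)$ is the cross-correlation of receive nodes $i$ and $j$ over all transmitters $k$. Unlike the i.i.d.-phase model, these phases are deterministic functions of the geometry, so this is an oscillatory sum. I would insert the far-field (Fraunhofer) expansion of $d_{ik}$ about the cluster centers to linearize $d_{ik}-d_{jk}$ in the transmit coordinates, turning the sum over the uniformly random positions of the $k$'s into a phase integral governed by the angular spread $\sim D/L$ across the aperture $D/\lambda$: the correlation is $\Theta(N)$ for receive pairs within a coherence distance and decays off it, and counting the coherent pairs yields $\|\tilde H\tilde H^{*}\|_F^2=O(\max\{N^3,N^4M^{-1}\})$ with $M=\Theta(D^2/(\lambda L))$ up to the logarithmic correction in (\ref{eqn:M}). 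Making this cancellation rigorous --- bounding the oscillatory sums uniformly, handling the regime transition encoded by the outer $\max\{1,\cdot\}$ and the $\bigl(1+(\log\tfrac{D^2}{\lambda L})^+\bigr)^{-1}$ factor, and upgrading every moment estimate to hold with high probability over the node placement via concentration --- is the technical heart of the argument and the step I expect to be hardest.
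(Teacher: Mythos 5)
Your reduction is, up to phrasing, exactly the paper's: the Cauchy--Schwarz count of eigenvalues above the threshold $(1-\delta)\tr(B)/N$ is precisely the Paley--Zygmund step in Appendix \ref{appendix:MIMO}, the normalization $\tilde H=(L/\sqrt G)H$ and the deterministic bound $\tr(\tilde H\tilde H^*)=\Theta(N^2)$ match the paper's $F$ and $\E[\gamma]=\Theta(N)$, and your Frobenius triangle inequality is the paper's Lemma \ref{lemma:tr_inequality}. One caveat on the SINR bookkeeping: since $\Sigma$ and $PHH^*+\Sigma$ do not share eigenvectors, ``crediting each strong mode with its eigenvalue and debiting the average interference $\tr(\Sigma)/N$'' is not a per-mode statement one can make directly from a projection scheme; the paper obtains it by applying the AM--GM inequality to $\det(I+\Sigma)$ as a whole before thresholding. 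This is fixable, but your scheme-level phrasing hides a real step.

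The genuine gap is the step you yourself defer, and it is not merely technical bookkeeping --- it is where $M$, i.e., the entire content of the theorem, comes from, and the route you sketch would fail as stated. First, the Fraunhofer linearization of $d_{ik}-d_{jk}$ about the cluster centers discards quadratic terms in the node coordinates whose phase contribution is $\frac{2\pi}{\lambda}\,\Theta(D^2/L)$, i.e., $\Theta(M)$ radians up to the logarithmic factor; this error is of exactly the same order as the coherence scale you are trying to resolve, and it is non-negligible precisely in the regime $M\gg 1$ where the theorem is interesting. This is why the paper's proof is deliberately approximation-free (its footnote notes that the conference version \cite{Lee:08} used such an approximation and obtained a slightly different $M$): instead of expanding, it conditions on $(z_u,z_v,d_{sv})$, reduces the cross-term expectation to one-dimensional oscillatory integrals of a piecewise-monotone density via Lemma \ref{lemma:int_upper}, and performs a geometric case split ($\Gamma_1$ versus $\Gamma_2$, Lemma \ref{lemma:phi_lower}), which is exactly where the outer $\max$ and the $\bigl(1+(\log\frac{D^2}{\lambda L})^+\bigr)^{-1}$ correction in (\ref{eqn:M}) are generated. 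Second, even granting $\E[Q_{1212}]=O(M^{-1})$, your appeal to ``concentration over the node placement'' is not routine: $\tr(\tilde H\tilde H^*\tilde H\tilde H^*)$ is a sum of $\Theta(N^4)$ identically distributed but \emph{dependent} terms $Q_{ijkl}$ (any two quadruples sharing a node index are correlated), so the i.i.d.\ strong law or standard Chernoff bounds do not apply. The paper needs a strong law of large numbers for weakly correlated variables (Lyons' theorem, Theorem \ref{thm:conv}, Appendix \ref{appendix:lemma_converge}), together with the combinatorial count that all but $O(K^{7/4})$ of the pairs of quadruples are independent, to get the almost-sure convergence of the sample mean to $\E[Q_{1212}]$. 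Without these two ingredients --- an exact oscillatory-integral bound in place of the Fraunhofer expansion, and a dependence-aware law of large numbers --- the claimed estimate $\|\tilde H\tilde H^*\|_F^2=O(\max\{N^3,N^4M^{-1}\})$ with high probability, and hence the theorem, remains unproven.
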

We have the following corollaries for certain classes of $\Sigma$.
\begin{corollary}
If there is no external interference, i.e., $W=0$, the capacity $C(H)$ of the cooperative MIMO channel from $C_T$ to $C_R$  is lower-bounded as
\begin{align*}
C(H)\geq K_3'\min\{N,M\} \log\left(1+K_4'\frac{GP}{L^2}N\right)
\end{align*}
with high probability as $N$ tends to infinity, where $K_3'$ and $K_4'$ are positive constants independent of $D$, $L$, $N$, and $\lambda$.
\end{corollary}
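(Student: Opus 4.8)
The plan is to obtain this corollary as a direct specialization of Theorem~\ref{theorem:MIMO} to the interference-free regime, so no new probabilistic machinery is needed and the high-probability qualifier is inherited verbatim. First I would note that $W=0$ forces $\Sigma=0$, whence $\tr(\Sigma)=\tr(\Sigma^2)=0$ and therefore $\rho_1=0$ and $\rho_2=0$. Substituting these two values into the general lower bound reduces the task to simplifying the prefactor and the logarithmic argument separately.

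For the prefactor, with $\rho_1=0$ and $\rho_2=0$ the expression
\begin{align*}
N\frac{\delta^2(\rho_1+K_1'N)^2}{(\rho_2^{1/2}+(K_2'\max\{N^2,N^3M^{-1}\})^{1/2})^2}
\end{align*}
collapses to $\frac{\delta^2(K_1')^2 N^3}{K_2'\max\{N^2,N^3M^{-1}\}}$. The one genuinely load-bearing step is the algebraic identity $\frac{a}{\max\{b,c\}}=\min\{a/b,a/c\}$ for positive quantities, which gives $\frac{N^3}{\max\{N^2,N^3M^{-1}\}}=\min\{N,M\}$. Thus the prefactor equals $\frac{\delta^2(K_1')^2}{K_2'}\min\{N,M\}$, producing the claimed factor $K_3'\min\{N,M\}$. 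For the logarithmic factor, setting $\rho_1=0$ turns the argument into $1+\frac{GP}{L^2}(1-\delta)K_1'N$, which is exactly of the form $1+K_4'\frac{GP}{L^2}N$.

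To close, I would fix any $\delta\in(0,1)$ (e.g.\ $\delta=1/2$) so that both $\delta^2$ and $1-\delta$ are strictly positive, making $K_3'=\frac{\delta^2(K_1')^2}{K_2'}$ and $K_4'=(1-\delta)K_1'$ positive constants independent of $D$, $L$, $N$, and $\lambda$, inheriting this independence from the corresponding property of $K_1'$ and $K_2'$ in Theorem~\ref{theorem:MIMO}. I do not expect a real obstacle here: the derivation is purely a substitution followed by simplification. The only point demanding care is the inversion of the denominator's $\max$ into the statement's $\min\{N,M\}$, so I would verify that step explicitly rather than treat it as obvious, since it is what carries the qualitative content of the bound.
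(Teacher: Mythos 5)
Your proposal is correct and is essentially the paper's own proof: the paper simply says ``choose $\delta=\Theta(1)$ in Theorem~\ref{theorem:MIMO}, e.g., $\delta=\frac{1}{2}$,'' and your write-up just makes explicit the substitution $\rho_1=\rho_2=0$ and the simplification $N^3/\max\{N^2,N^3M^{-1}\}=\min\{N,M\}$ that the paper leaves to the reader. No gaps; the constants $K_3'=\frac{\delta^2(K_1')^2}{K_2'}$ and $K_4'=(1-\delta)K_1'$ inherit the required independence exactly as you state.
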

\begin{proof}
We choose $\delta=\Theta(1)$ in Theorem \ref{theorem:MIMO}, e.g., $\delta=\frac{1}{2}$.
\end{proof}
\begin{corollary} \label{corollary:mimo}
If $\rho_1=O(sN)$ and $\rho_2=O(s^{\nu}\max\{N^2,N^3M^{-1}\})$, where $s=\Omega(1)$ and $\nu\geq 1$, the capacity $C(H)$ of the cooperative MIMO channel from $C_T$ to $C_R$  is lower-bounded as
\begin{align*}
C(H)\geq \frac{K_5'}{s^{\nu+4}}\min\{N,M\} \log\left(1+K_6'\frac{\frac{NGP}{L^2}(1-s^{-1})}{1+\frac{NGP}{L^2}s}\right)
\end{align*}
with high probability as $N$ tends to infinity, where $K_5'$ and $K_6'$ are positive constants independent of $D$, $L$, $N$, and $\lambda$.
\end{corollary}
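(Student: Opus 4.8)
The plan is to specialize Theorem~\ref{theorem:MIMO} by substituting the two hypotheses and then choosing the free parameter $\delta$ to scale as a suitable inverse power of $s$. Write $\rho_1\le c_1 sN$ and $\rho_2\le c_2 s^\nu Q$ with $Q\triangleq\max\{N^2,N^3M^{-1}\}$, where $c_1,c_2$ are the constants hidden in the $O(\cdot)$ hypotheses. The observation that makes the target quantity $\min\{N,M\}$ appear is the identity $N^3/Q=\min\{N,M\}$: if $M\ge N$ then $Q=N^2$ and $N^3/Q=N=\min\{N,M\}$, while if $M\le N$ then $Q=N^3M^{-1}$ and $N^3/Q=M=\min\{N,M\}$. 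I will use this at the very end to convert the $N^3/Q$ produced by the prefactor into $\min\{N,M\}$.

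First I would lower-bound the prefactor. Since $\rho_1\ge0$ we have $(\rho_1+K_1'N)^2\ge(K_1'N)^2$, and from $\rho_2\le c_2 s^\nu Q$ together with $s^{\nu/2}\ge1$ (as $s=\Omega(1)$ and $\nu\ge1$) we get $(\rho_2^{1/2}+(K_2'Q)^{1/2})^2\le(\sqrt{c_2}+\sqrt{K_2'})^2 s^\nu Q$. Hence the prefactor in Theorem~\ref{theorem:MIMO} obeys
\begin{align*}
N\frac{\delta^2(\rho_1+K_1'N)^2}{(\rho_2^{1/2}+(K_2'Q)^{1/2})^2}
&\ge\frac{(K_1')^2\delta^2}{(\sqrt{c_2}+\sqrt{K_2'})^2\,s^\nu}\,\frac{N^3}{Q}\\
&=\frac{(K_1')^2\delta^2}{(\sqrt{c_2}+\sqrt{K_2'})^2\,s^\nu}\min\{N,M\}.
\end{align*}

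The delicate step, which dictates the exponent $\nu+4$, is the choice of $\delta$. In the logarithm the numerator $(1-\delta)K_1'N-\delta\rho_1$ must stay positive and reproduce the factor $1-s^{-1}$, yet the subtracted term is only controlled through $\delta\rho_1\le\delta c_1 sN$. To force $\delta\rho_1=O(N/s)$ despite $\rho_1$ being as large as $\Theta(sN)$, I take $\delta=\Theta(s^{-2})$, concretely $\delta=\frac{K_1'}{2c_1 s^2}$, which lies in $[0,1]$ for $s$ large. Then $\delta\rho_1\le\frac{K_1'}{2s}N$, and since $\delta\le\frac{1}{2s}$ for large $s$,
\begin{align*}
(1-\delta)K_1'N-\delta\rho_1\ge K_1'N\Big(1-\delta-\tfrac{1}{2s}\Big)\ge K_1'N\big(1-s^{-1}\big),
\end{align*}
while the denominator satisfies $1+\frac{GP}{L^2}\rho_1\le\max\{1,c_1\}\big(1+\frac{NGP}{L^2}s\big)$. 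By monotonicity of $\log(1+\cdot)$ these two estimates produce exactly the logarithmic factor of the corollary with $K_6'=K_1'/\max\{1,c_1\}$. Finally, substituting $\delta^2=\Theta(s^{-4})$ into the prefactor bound yields the $s^{-(\nu+4)}$ scaling, and collecting all constants into a single $K_5'$ completes the argument; the high-probability qualifier is inherited verbatim from Theorem~\ref{theorem:MIMO}, since every manipulation above is deterministic given its bound.

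I expect the main obstacle to be precisely the balancing in the third paragraph: the hypothesis permits $\rho_1$ to be as large as $\Theta(sN)$, so retaining the clean $(1-s^{-1})$ shape forces $\delta$ down to $\Theta(s^{-2})$, and the unavoidable $\delta^2$ in the prefactor is what turns the $s^{-\nu}$ coming from the interference into the stated $s^{-(\nu+4)}$. Checking that the chosen $\delta$ is admissible and that the small-$s$ range allowed by $s=\Omega(1)$ does not spoil the constants (for small constant $s$ all $s$-powers are themselves constants and the bound is either trivial or absorbed into $K_5',K_6'$) are the only remaining routine verifications.
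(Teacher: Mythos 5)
Your proposal is correct and follows exactly the paper's route: the paper's entire proof of this corollary is the single line ``choose $\delta=\Theta(s^{-2})$ in Theorem~\ref{theorem:MIMO},'' and your argument is precisely that choice with the implicit algebra (the identity $N^3/\max\{N^2,N^3M^{-1}\}=\min\{N,M\}$, the prefactor bound, and the numerator/denominator estimates inside the logarithm) written out. The balancing you identify --- $\delta\rho_1=O(N/s)$ forcing $\delta=\Theta(s^{-2})$ and hence the $s^{-(\nu+4)}$ prefactor --- is indeed the reason for the stated exponent, and your deferred small-$s$ checks are routine constant adjustments.
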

\begin{proof}
We choose $\delta=\Theta(s^{-2})$ in Theorem \ref{theorem:MIMO}.
\end{proof}

Note that $M$ matches the DoF limit predicted in \cite{Poon:05,Franceschetti:09} given as the product of the normalized cluster diameter $\frac{D}{\lambda}$ and the angular spread $\frac{D}{L}$ between the clusters. 

In the following subsections, Theorems \ref{thm:dense} and \ref{thm:extended} for dense and extended networks, respectively, are proved by constructing a modified HC scheme.
\subsection{Dense network}
Let us construct the modified HC scheme for dense networks consisting of $h$ hierarchy levels. For an area of $A'$, there are an order of $A'n$ nodes with high probability.\footnote{See Lemma 4.1 in~\cite{Ozgur:07} for the proof.} For simplicity, we assume that there are exactly $A'n$ nodes in our description of the scheme, but our results hold without such an assumption. Consider a $(h+1)$-tuple $(n_0,n_1,...,n_h)\in \mathbb{N}^{h+1}$ such that $n_h=n$ and $n_{k-1}\leq n_k$ for all $k\in [1:h]$ and a $h$-tuple $(m_0, m_1, ..., m_{h-1}) \in \mathbb{N}^{h}$ such that $m_{k}\leq n_{k}$ for all $k\in [0:h-1]$. For $k\in [0:h]$, let $A_k\triangleq \frac{n_k}{n}$ and $L_k=\sqrt{A_k}$. Consider a hierarchical structure of the network such that the network is divided into square areas of $A_{h-1}$, each of those square areas is again divided into smaller square areas of $A_{h-2}$, and so on, i.e., at the $k$-th hierarchy level for $k\in [1:h]$, each square area of $A_k$ is divided into smaller square areas of $A_{k-1}$.

Let $T_{k}(n_k,\lambda)$ for $k\in [0:h]$ denote the achievable throughput when a cluster of $n_k$ nodes operates as a network having its own $n_k$ source--destination pairs in an arbitrary manner. The following lemma gives $T_k(n_k,\lambda)$ as a function of $T_{k-1}(n_{k-1},\lambda)$ for $k\in [1:h]$.\footnote{Since $T_k(n_k,\lambda)$ for $k\in [1:h]$ has a recursive form, it also depends on $n_0,...,n_{k-1}, m_0,...,m_{k-1}$.}

\begin{lemma} \label{lemma:recursive}
Fix $k\in [1:h]$. Consider a cluster of $n_k$ nodes. If, for any two clusters $u$ and $v$ of $n_{k-1}$ nodes inside the cluster of $n_k$ nodes, a rate of $R_k$ is achievable with high probability for the MIMO communication from $m_{k-1}$ randomly chosen nodes in cluster $u$ to $m_{k-1}$ randomly chosen nodes in cluster $v$ when other nodes in the cluster of $n_k$ nodes are silent, we have
\begin{align}
T_{k}(n_k,\lambda) \geq
\frac{K''_1}{1+m_{k-1}/R_k}\frac{n_{k}m_{k-1}}{m_{k-1}n_{k-1}/T_{k-1}(n_{k-1},\lambda) + n_k} \label{eqn:Tk}
\end{align}
with high probability, where $K''_1$ is a positive constant independent of both $n$ and $\lambda$.
\end{lemma}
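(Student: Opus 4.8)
The plan is to establish the recursive throughput bound in Lemma~\ref{lemma:recursive} by following the standard hierarchical-cooperation analysis of \cite{Ozgur:07}, adapted to the situation in which only $m_{k-1}$ (rather than all $n_{k-1}$) nodes per subcluster participate in each MIMO burst. The scheme at level $k$ proceeds in three phases, and the factor structure of (\ref{eqn:Tk}) reflects exactly the overhead of these three phases. In the first phase, each of the $n_k/n_{k-1}$ subclusters of size $n_{k-1}$ independently and in parallel distributes its source bits among its own nodes by operating recursively as a network with $n_{k-1}$ source--destination pairs; this phase delivers throughput governed by $T_{k-1}(n_{k-1},\lambda)$. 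In the second phase, long-range MIMO transmissions are performed between pairs of subclusters using the per-link MIMO rate $R_k$ that is assumed in the hypothesis; since only $m_{k-1}$ nodes per subcluster are active, each MIMO transmission conveys $m_{k-1}$ streams at aggregate rate $R_k$. In the third phase, each destination subcluster locally combines the received observations, again by a recursive level-$(k-1)$ operation.

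\textbf{First I would} set up the bookkeeping for how many bits must flow through each phase and how long each phase takes. Each source node must spread its packet of, say, $Q$ bits across the $n_{k-1}$ nodes of its subcluster; aggregated over the $n_k$ sources in the level-$k$ cluster, the local-exchange phases (Phases~1 and~3) must move $\Theta(n_k Q)$ bits through subnetworks whose throughput is $T_{k-1}(n_{k-1},\lambda)$, costing time $\Theta(n_k Q / T_{k-1}(n_{k-1},\lambda))$ after accounting for the $n_{k-1}$-fold spatial reuse of the subcluster operations. In Phase~2, the number of long-distance MIMO transmissions needed so that every destination subcluster collects the observations corresponding to all its bits scales like $\Theta(n_k Q / m_{k-1})$ MIMO bursts, each of duration $\Theta(m_{k-1}/R_k)$, giving a Phase~2 time of $\Theta(n_k Q / R_k)$. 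The key point is that the MIMO phase carries $m_{k-1}$ parallel streams per burst, so the effective number of bits delivered per unit time in Phase~2 is proportional to $R_k$, while the quantizing/exchange phases pay a penalty proportional to $m_{k-1}/R_k$ per exchanged observation because each MIMO observation must itself be redistributed. Combining the three phase durations, the total time to deliver $\Theta(n_k Q)$ bits is
\begin{align*}
\Theta\!\left(\frac{n_k Q}{R_k}\Bigl(1+\tfrac{R_k}{m_{k-1}}\Bigr)\right)
+\Theta\!\left(\frac{m_{k-1}n_{k-1}}{T_{k-1}(n_{k-1},\lambda)}\cdot\frac{n_k Q}{m_{k-1}}\right),
\end{align*}
and inverting this to solve for the achievable per-cluster throughput $T_k(n_k,\lambda)=\Theta(n_k Q/\text{time})$ yields precisely the product form in (\ref{eqn:Tk}), with the factor $(1+m_{k-1}/R_k)^{-1}$ isolating the MIMO-phase cost and the remaining fraction isolating the combined local-exchange cost.

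\textbf{The hard part will be} handling the quantization overhead cleanly and controlling the interference between simultaneously active subcluster pairs so that the claimed MIMO rate $R_k$ is genuinely achievable \emph{in parallel} across many subcluster pairs, not merely for one isolated pair. The hypothesis of the lemma only guarantees $R_k$ for a single pair with all other nodes silent; to get the aggregate gain one must schedule MIMO transmissions under a TDMA pattern with bounded spatial reuse so that the residual interference from concurrent transmissions is controlled. This is exactly where Corollary~\ref{corollary:mimo} enters: by bounding the interference covariance through $\rho_1$ and $\rho_2$, one shows that with a constant-factor reuse the achievable rate degrades only by a constant factor, preserving the scaling. I would invoke this, together with the standard argument that finite-rate quantization of the MIMO observations costs only a bounded number of bits per stream (so the per-observation exchange burden is $\Theta(1)$ bits, not growing with $n$), to absorb all such losses into the constant $K''_1$. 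Finally, I would note that all the ``$\Theta$'' node-count statements (that each area $A'$ contains $\Theta(A' n)$ nodes, and that randomly chosen participating subsets are well-separated) hold with high probability by the same concentration arguments as in \cite{Ozgur:07}, so the whole recursion carries through with high probability.
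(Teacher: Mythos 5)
Your three-phase outline is indeed the same as the paper's, but there are two genuine problems. The first is the time accounting for the local-exchange phases, which is internally inconsistent and wrong in a way that kills the result. The crucial point you miss is that the $n_k/n_{k-1}$ subclusters run Phases 1 and 3 \emph{in parallel} under the 9-TDMA reuse pattern, so the duration of Phase 1 is $\Theta(m_{k-1}n_{k-1}/T_{k-1}(n_{k-1},\lambda))$ (measuring one source message as $m_{k-1}$ subblocks), \emph{independently of how many subclusters there are}: each subcluster serves only its own $n_{k-1}$ sources, in $m_{k-1}$ subphases of $n_{k-1}/T_{k-1}(n_{k-1},\lambda)$ slots each, and all subclusters do this simultaneously up to the factor $9$. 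Your prose instead charges $\Theta(n_kQ/T_{k-1})$ and your displayed formula charges $\Theta(n_{k-1}n_kQ/T_{k-1})$, i.e., you pay for the subclusters essentially sequentially; the two versions even disagree with each other by a factor $n_{k-1}$. After inverting, your denominator contains a term of order $1/T_{k-1}$ (prose) or $n_{k-1}/T_{k-1}$ (display) where (\ref{eqn:Tk}) requires $n_{k-1}/(n_kT_{k-1})$. This is not a constant-factor slip: with your accounting one gets $T_k\leq T_{k-1}$ for every $k$, so the recursion never grows beyond the bottom-level throughput, and neither the lemma nor Theorems \ref{thm:dense} and \ref{thm:extended} downstream can be recovered. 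Hence the claim that your total time ``inverts to precisely the product form in (\ref{eqn:Tk})'' is false as written.

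The second problem is your treatment of Phase 2. The paper performs the $n_k$ MIMO transmissions \emph{successively}, one source--destination pair at a time within the cluster of $n_k$ nodes; this costs $n_km_{k-1}/R_k$ slots, which is exactly what the prefactor $(1+m_{k-1}/R_k)$ together with the $n_k$ term in (\ref{eqn:Tk}) absorbs, and it is precisely the situation covered by the hypothesis ($R_k$ achievable when the other nodes of the $n_k$-cluster are silent). No spatial reuse of concurrent MIMO bursts is needed for this lemma, and none can be justified from its hypothesis: the hypothesis says nothing about rates under concurrent in-cluster transmissions, so your plan to invoke Corollary~\ref{corollary:mimo} here to control inter-pair interference has no basis inside this lemma. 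That interference analysis (for MIMO transmissions taking place in \emph{other} clusters of $n_k$ nodes operating in parallel at the next hierarchy level) is exactly the content of Lemma~\ref{lemma:Gk}, where the value of $R_k$ is established; the present lemma is pure bookkeeping that takes $R_k$ as given. A further minor slip: the quantized observation of one MIMO burst is $Qm_{k-1}/R_k$ subblocks (a fixed quantization rate per slot over a burst of duration $m_{k-1}/R_k$), not $\Theta(1)$ bits per observation; this is what generates the Phase-3 cross term $Qm_{k-1}^2n_{k-1}/(R_kT_{k-1}(n_{k-1},\lambda))$ that the factor $(1+m_{k-1}/R_k)$ exists to absorb.
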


\begin{proof}
We construct a scheme for the cluster of $n_k$ nodes when it operates as a network having its own $n_k$ source--destination pairs in an arbitrary manner.  From now on, a cluster indicates a cluster of $n_{k-1}$ nodes inside the cluster of $n_k$ nodes unless otherwise specified. We randomly assign the indices $[1:n_{k-1}]$ to $n_{k-1}$ nodes in each cluster and let $\mathcal{A}_{u,v}$ for $u\in [1:\frac{n_k}{n_{k-1}}]$ and $v\in [1:n_{k-1}]$ denote the set $\{(v+i)_{n_{k-1}}+1|1\leq i\leq m_{k-1}\}$ of nodes in cluster $u$.

The scheme consists of three phases. Let us first explain the scheme briefly from the perspective of source $s$ in cluster $u$ and its destination $d$ in cluster $v$. In the first phase, source $s$ in cluster $u$ distributes its message to $\mathcal{A}_{u,s}$. In the second phase, $\mathcal{A}_{u,s}$ performs MIMO transmission to $\mathcal{A}_{v,d}$. In the last phase, destination $d$ in cluster $v$ collects quantized MIMO observations from $\mathcal{A}_{v,d}$ and decodes the message. The detailed operation in each phase is as follows.

\begin{figure}[t]
 \centering
{ \small
\psfrag{1}[c]{$1$}
\psfrag{2}[c]{$2$}
\psfrag{3}[c]{$3$}
\psfrag{4}[c]{$4$}
\psfrag{5}[c]{$5$}
\psfrag{6}[c]{$6$}
\psfrag{7}[c]{$7$}
\psfrag{8}[c]{$8$}
\psfrag{9}[c]{$9$}
  \includegraphics[width=40mm]{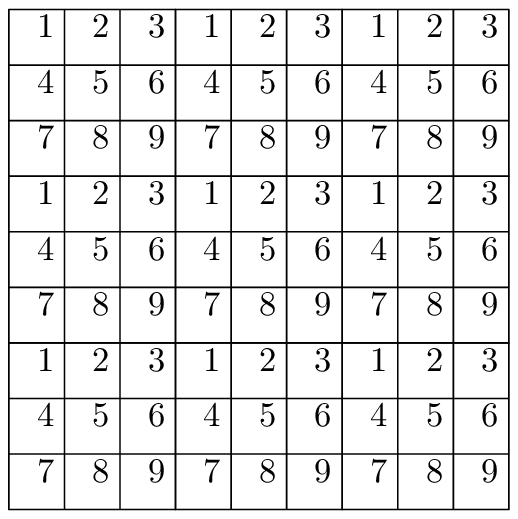}}
\caption{The big square and small squares represent a cluster of $n_k$ nodes and the clusters of $n_{k-1}$ nodes inside it, respectively. In Phases 1 and 3, the clusters of $n_{k-1}$ nodes operate in parallel according to the following 9-TDMA scheme: the total time of the phase is divided into 9 TDMA slots, and, in the $i$-th TDMA slot for $i\in [1:9]$, clusters marked with $i$ operate simultaneously while the other clusters are silent. } \label{fig:tdma}
\end{figure}

\begin{itemize}
\item Phase 1: Each cluster operates in parallel according to the 9-TDMA scheme of \cite{Ozgur:07} illustrated in Fig. \ref{fig:tdma}. Source $s$ in cluster $u$ distributes its message to $\mathcal{A}_{u,s}$, i.e., the message of $s$ is split into $m_{k-1}$ subblocks and each node in $\mathcal{A}_{u,s}$ receives one subblock. For a cluster, this can be done by setting up $m_{k-1}$ subphases, where $n_{k-1}$ source--destination pairs in each of the subphases are assigned as follows: in subphase $i\in [1:m_{k-1}]$, $\{(s, (s+i)_{n_{k-1}}+1)|s\in [1:n_{k-1}]\}$ is the set of $n_{k-1}$ source--destination pairs. Because $T_{k-1}(n_{k-1},\lambda)$ is achievable for a network of $n_{k-1}$ nodes, $n_{k-1} /T_{k-1}(n_{k-1},\lambda)$ time slots are needed for each subphase. Since there are $m_{k-1}$ subphases in each TDMA slot, Phase 1 needs a total of $9m_{k-1}n_{k-1}/T_{k-1}(n_{k-1},\lambda)$ time slots.

\item Phase 2: We perform successive MIMO transmissions for all source--destination pairs, i.e., MIMO transmission from $\mathcal{A}_{u,s}$ to $\mathcal{A}_{v,d}$ for source $s$ in cluster $u$ and destination $d$ in cluster $v$. Since a rate of $R_k$ is assumed to be achievable for each MIMO transmission, $m_{k-1}/R_k$ time slots are needed for each source--destination pair. Since we have $n_k$ source--destination pairs, a total of $n_km_{k-1}/R_k$ time slots are needed for Phase 2. After Phase 2, each node quantizes the MIMO observations at a fixed rate $Q$ subblocks per time slot.\footnote{From Appendix II in~\cite{Ozgur:07},  a strategy exists for each node to encode the observation of a MIMO transmission at a fixed rate $Q$ such that the resultant $m_{k-1}$-by-$m_{k-1}$ quantized MIMO channel has the same multiplexing gain as the original MIMO channel. }

\item Phase 3: Each cluster operates in parallel according to the 9-TDMA scheme of \cite{Ozgur:07} depicted in Fig. \ref{fig:tdma}. Destination $d$ in cluster $v$ collects the quantized observations of the MIMO transmission intended for it from $\mathcal{A}_{v,d}$ and then decodes the message. Note that each quantized MIMO observation consists of $Qm_{k-1}/R_k$ subblocks. By setting up $m_{k-1}$ subphases  for a cluster similarly as in Phase 1, where $n_{k-1}$ source--destination pairs are assigned in each of the subphases, a total of $(9Qm_{k-1}^2n_{k-1})/(R_kT_{k-1}(n_{k-1},\lambda))$ time slots are needed for Phase 3.
\end{itemize}
In total,
\begin{align*}
9m_{k-1}n_{k-1}/T_{k-1}(n_{k-1},\lambda)+n_km_{k-1}/R_k\\
+(9Qm_{k-1}^2n_{k-1})/(R_kT_{k-1}(n_{k-1},\lambda))
\end{align*}
time slots are needed to transport $n_k$ messages, i.e., $n_km_{k-1}$ subblocks. Hence, the constructed scheme yields an aggregate throughput of (\ref{eqn:recursive}), which proves Lemma \ref{lemma:recursive}.
\begin{figure*}\hrule
\begin{align}
T_{k}(n_k,\lambda)&=\frac{n_km_{k-1}}{9m_{k-1}n_{k-1}/T_{k-1}(n_{k-1},\lambda)+n_km_{k-1}/R_k+(9Qm_{k-1}^2n_{k-1})/(R_kT_{k-1}(n_{k-1},\lambda))}\cr
&\geq \frac{1}{9(Q+1)(1+m_{k-1}/R_k)}\frac{n_{k}m_{k-1}}{m_{k-1}n_{k-1}/T_{k-1}(n_{k-1},\lambda) + n_k} \label{eqn:recursive}
\end{align}\hrule
\end{figure*}

In the above explanation of the scheme, we focused on the modified operation from the scheme of~\cite{Ozgur:07} and the resulting scaling law of the throughput. The readers should refer to \cite{Ozgur:07} for a more detailed description of the scheme. However, taking those details into account does not change the throughput scaling. \end{proof}

The modified HC scheme is constructed recursively using the scheme in the proof of Lemma \ref{lemma:recursive} for the original network of $n$ nodes and using the multihop via percolation theory~\cite{Franceschetti:07} for clusters of $n_0$ nodes at the bottom hierarchy. Now, let us show an achievable throughput scaling using the modified HC scheme with $h$ hierarchy levels. Note that throughput achieved by the modified HC scheme depends on the choice of $(n_0,n_1,...,n_h)$ and $(m_0,m_1,...,m_{h-1})$.  First, we choose $m_{k-1}$ as $G_k$ for $k\in [1:h]$, where
\begin{align*}
G_k\triangleq \min\left\{n_{k-1}, \frac{n_{k-1}}{ (n_kn)^{\frac{1}{2}}\lambda\log\lambda^{-1} } \right\}.
\end{align*}
For the modified HC scheme with the above choice of $(m_0,m_1,...,m_{h-1})$, the following lemma shows that a rate of $R_k=\Theta(G_k/(\log n)^7)$ is achievable for the MIMO transmissions in Phase 2 at the $k$-th hierarchy level  for $k\in [1:h]$.

\begin{lemma} \label{lemma:Gk}
In Phase 2 at the $k$-th hierarchy level of the modified HC scheme for $k\in [1:h]$, a rate of $R_k=\Theta(G_k/(\log n)^7)$ is achievable for the MIMO transmissions between clusters of $G_k$ nodes.
\end{lemma}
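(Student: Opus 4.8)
The plan is to invoke Corollary~\ref{corollary:mimo} for the MIMO channel that actually arises in Phase~2, taking $N=m_{k-1}=G_k$ transmit and receive antennas. First I would fix the geometry: the sub-clusters $u,v$ each have side length $D=L_{k-1}=\sqrt{n_{k-1}/n}$, and since Lemma~\ref{lemma:recursive} requires the rate for \emph{any} pair of sub-clusters of the enclosing $n_k$-cluster, the relevant worst-case separation is $L=\Theta(L_k)=\Theta(\sqrt{n_k/n})$; one checks $L\ge 2D$ once $n_k$ exceeds a fixed multiple of $n_{k-1}$, so Theorem~\ref{theorem:MIMO} applies. Substituting into (\ref{eqn:M}) gives $\frac{D^2}{\lambda L}=\Theta\!\big(\frac{n_{k-1}}{(nn_k)^{1/2}\lambda}\big)$, hence $M=\Theta\!\big(\frac{D^2/(\lambda L)}{1+(\log\frac{D^2}{\lambda L})^+}\big)$. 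The extra $\log\lambda^{-1}$ in the definition of $G_k$ is there precisely to force $G_k\le cM$: in dense networks $\log\lambda^{-1}\ge\frac12\log n$ dominates $1+(\log\frac{D^2}{\lambda L})^+$ up to a constant, so $\min\{N,M\}=\min\{G_k,M\}=\Theta(G_k)$. This already supplies the $\Theta(G_k)$ prefactor of the target rate.

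The substantive step is the interference. While the MIMO runs, the nodes outside the $n_k$-cluster that share its $9$-TDMA color at the parent levels are active, so $C_R$ sees $W$ with covariance $\Sigma=P\sum_c H_cH_c^*$, the sum over concurrently active clusters $c$ at distances $d_c$. I would bound the two functionals Corollary~\ref{corollary:mimo} needs. For $\tr(\Sigma)$, a cluster at distance $d_c$ contributes $\Theta(G_k^2G/d_c^2)$; since the ring at radius $\sim rL_k$ contains $\Theta(r)$ clusters, the ring sum telescopes to $\Theta(r^{-1})$, giving $\tr(\Sigma)=\Theta\!\big(\frac{G_k^2GP}{L_k^2}\log n\big)$ and $\rho_1=\frac{L^2}{NGP}\tr(\Sigma)=\Theta(G_k\log n)=O(sN)$ with $s=\Theta(\log n)$. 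For $\tr(\Sigma^2)$ I would use that $H_c$ has effective rank $\sim M_c\propto d_c^{-1}$, so its squared-eigenvalue sum is $\Theta(G_k^4G^2/(M_cd_c^4))$; the ring sum now converges and yields $\tr(\Sigma^2)=\Theta\!\big(\frac{G_k^4G^2P^2}{ML_k^4}\big)$, whence $\rho_2=\frac{L^4}{N(GP)^2}\tr(\Sigma^2)=\Theta(N^3/M)=O\!\big(s^{\nu}\max\{N^2,N^3M^{-1}\}\big)$ with $\nu=1$. I expect the main obstacle to live here: justifying the effective-rank claim for each $H_c$ and showing the off-diagonal ($c\neq c'$) terms of $\tr(\Sigma^2)$ are lower order (being phase-incoherent they should average out, but this needs the same concentration machinery that underlies Theorem~\ref{theorem:MIMO}).

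With $\rho_1,\rho_2$ in hand and $s=\Theta(\log n)$, $\nu=1$, Corollary~\ref{corollary:mimo} gives
\begin{align*}
C(H)\ge \frac{K_5'}{(\log n)^{5}}\,\Theta(G_k)\,\log\!\left(1+K_6'\frac{\tfrac{NGP}{L^2}(1-s^{-1})}{1+\tfrac{NGP}{L^2}s}\right).
\end{align*}
It then remains to lower-bound the logarithmic SINR factor. Writing $\gamma=\frac{NGP}{L^2}=\Theta(G_kP/n_k)$, the argument behaves like $\min\{\gamma,1/s\}$, so the factor is $\Omega(\min\{\gamma,1/\log n\})$; using the dense-network scaling $G=\Theta(n^{-1})$ together with the parameter regime of the scheme, one checks this factor is only poly-logarithmically small, contributing the remaining $(\log n)^{-2}$, so that $C(H)=\Omega\!\big(G_k/(\log n)^{7}\big)$ and $R_k=\Theta(G_k/(\log n)^{7})$, the exponent $7$ splitting as $5$ from $s^{\nu+4}$ and $2$ from the SINR factor. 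Besides the covariance bookkeeping of the second paragraph, I anticipate that confirming this SINR lower bound (and, if needed, identifying the regime in which it is tight) will be the other place where care is required.
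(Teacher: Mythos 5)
Your overall strategy is the paper's: invoke Corollary~\ref{corollary:mimo} with $N=G_k$, $D=L_{k-1}$, $L=\Theta(L_k)$, bound $\rho_1,\rho_2$ by grouping the simultaneously active clusters into rings of $\Theta(i)$ clusters at distance $\Theta(iL_k)$, and use the $\log\lambda^{-1}$ in the definition of $G_k$ to get $\min\{G_k,M\}=\Theta(G_k)$. Your $\rho_1$ computation and the $\min\{G_k,M\}$ argument match the paper. But there are two genuine gaps, one of them fatal.

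The fatal gap is the SINR factor. You take the per-symbol transmit power to be $P$, so $\gamma=\frac{NGP}{L^2}=\Theta(G_kP/n_k)$, and then assert that ``one checks this factor is only poly-logarithmically small.'' That check fails: $G_k/n_k$ is polynomially small at every level (the scheme chooses, e.g., $n_{k-1}=n_k^{(k+1)/(k+2)}$ in the non-DoF-limited regime, and $G_k\le n_{k-1}$), so $\log\left(1+K_6'\frac{\gamma(1-s^{-1})}{1+\gamma s}\right)=\Theta(\gamma)$ is polynomially small, and your bound falls polynomially short of $\Theta(G_k/(\log n)^7)$. The paper's fix is the power allocation (\ref{eqn:P}): each node acts as a transmitter in only $m_{k-1}$ of the $n_k$ successive MIMO transmissions of Phase 2, i.e., a $G_k/n_k$ fraction of the time, so it may burst at $P'=\frac{n_k}{G_k}P$ while still meeting the average power constraint $P$; then $\frac{G_kGP'}{L^2}=\frac{L_k^2}{L^2}P=\Theta(1)$ and the logarithmic factor is $\Theta(1/\log n)$, contributing exactly one extra logarithm. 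Without this time-sharing power boost the claimed rate is not achieved.

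The second gap is one you flagged yourself: your $\tr(\Sigma^2)$ estimate requires the cross terms $\tr(H_cH_c^*H_{c'}H_{c'}^*)$, $c\neq c'$, to be lower order, and the ``phase-incoherence'' argument you appeal to is never established (nor does the paper attempt it). The paper sidesteps the cross terms entirely with the deterministic matrix trace inequality of Lemma~\ref{lemma:tr_inequality}, $\tr(\sum_i A_i)^2\le(\sum_i \tr^{1/2}(A_i^2))^2$, applied to the per-ring covariances: the harmonic ring sum then appears inside the square, which costs an extra logarithm and gives $\rho_2=O((\log n)^2\max\{G_k^2,G_k^3M^{-1}\})$, i.e., $s=\log n$ with $\nu=2$ rather than your $\nu=1$. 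Your per-cluster ``effective rank'' estimate is the analogue of the paper's bound $\tr(\hat F_i\hat F_i^*\hat F_i\hat F_i^*)=O(|\mathcal{U}_V(i)|^2\max\{G_k^3,G_k^4\hat M^{-1}\})$ and is fine in spirit, but the cross-term issue must be closed by something like Lemma~\ref{lemma:tr_inequality}, not deferred. With the paper's accounting the exponent $7$ decomposes as $6+1$ (from $s^{\nu+4}=(\log n)^6$ and the $\Theta(1/\log n)$ SINR factor), not as your $5+2$.
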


\begin{proof}
Fix $k\in[1:h]$. In Phase 2 at the $k$-th hierarchy level, we let each transmitting cluster of $G_k$ nodes use a randomly generated Gaussian code according to $\mathcal{CN}(0,P'I)$, where
\begin{align}
P'=\frac{L_k^2}{GG_k}P=\frac{n_k}{G_k}P.\label{eqn:P}
\end{align}
This satisfies the average power constraint of $P$ per node because each node participates in the MIMO transmission for $\frac{G_k}{n_k}$ fraction of time in Phase 2.

Consider the MIMO transmission from cluster $C_T$ of $G_k$ nodes to cluster $C_R$ of $G_k$ nodes inside cluster $V$ of $n_k$ nodes in Phase 2 at the $k$-th hierarchy level of the modified HC scheme. To prove that the capacity of the MIMO channel from $C_T$ to $C_R$ is at least linear in $G_k/(\log n)^7$, we use Corollary \ref{corollary:mimo}. By adopting the notations for Corollary \ref{corollary:mimo}, let $D$ and $L$ denote the side length of $C_T$ and $C_R$ and the distance between the centers, respectively, and  let $M$ be given as (\ref{eqn:M}). The MIMO transmission from $C_T$ to $C_R$ is interfered by the MIMO transmission by $G_k$ nodes in each cluster of $n_k$ nodes that operates simultaneously with $V$. Let $\mathcal{U}_V$ denote the set of clusters of $n_k$ nodes that operate simultaneously with $V$. Then, $\mathcal{U}_V$ can be split into subgroups according to their distance to $V$ such that the $i$-th subgroup $\mathcal{U}_V(i)$ contains $8i$ or less clusters of $n_k$ nodes and the distance between the centers of $V$ and each cluster in $\mathcal{U}_V(i)$ is greater than or equal to $(3i)L_k$ for $i=1,2, \ldots$, as illustrated in Fig. \ref{fig:interference}. The number of such subgroups can be simply bounded by $n/n_k$. Let $|\mathcal{U}_V(i)|$ denote the number of clusters of $n_k$ nodes in $\mathcal{U}_V(i)$.
Then, the MIMO channel from $C_T$ to $C_R$ is given as (\ref{eqn:mimo_ch}), in which $G_k$ is substituted for $N$ and the interference $W$ is given as
$W=\sum_{i=1}^{n/n_k}\hat{H}_i\hat{X}_i$, where $\hat{H}_i$ is the $G_k$-by-$(|\mathcal{U}_V(i)|G_k)$ channel matrix from $\mathcal{U}_V(i)$ to $C_R$ and $\hat{X}_i$ is the $(|\mathcal{U}_V(i)|G_k)$-by-1 transmitted vector from $\mathcal{U}_V(i)$.
\begin{figure}[t]
 \centering
{ \small
\psfrag{V}[c]{$V$}
  \includegraphics[width=40mm]{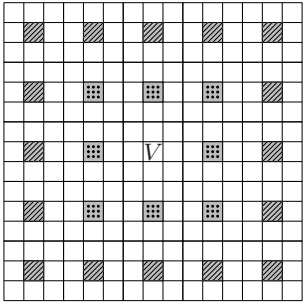}}
\caption{For cluster $V$, clusters that operate simultaneously with $V$ according to the 9-TDMA scheme are represented as shaded. The set of shaded clusters with dots represents $\mathcal{U}_V(1)$, and the set of shaded clusters with slash lines represents $\mathcal{U}_V(2)$. } \label{fig:interference}
\end{figure}

Now, let us show that the covariance matrix $\Sigma$ of $W$ satisfies the conditions in Corollary \ref{corollary:mimo} for $s=\log n$ and $\nu=2$. Let $\hat{F}_i=\frac{(3i)L_k}{\sqrt{G}}\hat{H}_i$.
Then, we have
\begin{align*}
\rho_1&=\frac{L^2}{G_kGP'}\tr(\Sigma)\cr
&=\frac{1}{G_k}\sum_{i=1}^{n/n_k}\tr\left(\frac{L^2}{G}\hat{H}_i\hat{H}_i^*\right)\cr
&\leq\frac{1}{G_k}\sum_{i=1}^{n/n_k}\frac{1}{(3i)^2}\tr(\hat{F}_i\hat{F}_i^*)
\end{align*}
and
\begin{align*}
\rho_2&=\frac{L^4}{G_k(GP')^2}\tr(\Sigma^2)\\
&=\frac{L^4}{G_kG^2}\tr\left(\left(\sum_{i=1}^{n/n_k}\hat{H}_i\hat{H}_i^*\right)^2\right)\\
&\overset{(a)}{\leq} \frac{L^4}{G_kG^2}\left(\sum_{i=1}^{n/n_k} \tr^{1/2}(\hat{H}_i\hat{H}_i^*\hat{H}_i\hat{H}_i^*)\right)^2\\
&=\frac{1}{G_k}\left(\frac{L}{L_k}\right)^4\left(\sum_{i=1}^{n/n_k} \frac{1}{(3i)^2}\tr^{1/2}(\hat{F}_i\hat{F}_i^*\hat{F}_i\hat{F}_i^*)\right)^2
\end{align*}
where $(a)$ is from the following lemma, which is a direct consequence of the matrix trace inequality in \cite{YangYangTeo:78}.
\begin{lemma}\label{lemma:tr_inequality}
If $A_i$'s are positive semidefinite matrices, then $\tr(\sum_i A_i)^2\leq (\sum_i \tr^{1/2}(A_i^2))^2$.
\end{lemma}

By applying similar bounding techniques as those for $\tr(FF^*)$ and $\tr(FF^*FF^*)$ in Appendix \ref{appendix:MIMO} to $\tr(\hat{F}_i\hat{F}_i^*)$ and $\tr(\hat{F}_i\hat{F}_i^*\hat{F}_i\hat{F}_i^*)$, we can show \[\tr(\hat{F}_i\hat{F}_i^*) =O(|\mathcal{U}_V(i)| G_k^2)\] and \[\tr(\hat{F}_i\hat{F}_i^*\hat{F}_i\hat{F}_i^*)=O(|\mathcal{U}_V(i)|^2\max\{G_k^3, G_k^4\hat{M}^{-1}\})\] with high probability, where $\hat{M}$ is given as
\[\hat{M}=\max\left\{1,\frac{D^2}{\lambda L_k} \left(1+\left(\log \frac{D^2}{\lambda L_k}\right)^+\right)^{-1}\right\}.\]
Because $|\mathcal{U}_V(i)|\leq 8i$ and $\left(\frac{L}{L_k}\right)^4\hat{M}^{-1}\leq M^{-1}$, we have \[\frac{L^2}{G_kGP'}\tr(\Sigma)=O((\log n) G_k)\] and \[\frac{L^4}{G_k(GP')^2}\tr(\Sigma^2)=O((\log n)^2\max\{G_k^2, G_k^3M^{-1}\}).\] Hence, the conditions in Corollary \ref{corollary:mimo} are satisfied for $s=\log n$ and $\nu=2$. From Corollary \ref{corollary:mimo} for $s=\log n$ and $\nu=2$, the capacity $C(H)$ of the MIMO channel from $C_T$ to $C_R$ is lower-bounded as
\begin{align*}
C(H)&\geq K_5'\frac{\min\{G_k,M\}}{(\log n)^6}\log\left(1+K_6' \frac{\frac{G_kGP'}{L^2}(1-\frac{1}{\log n})}{1+\frac{G_kGP'}{L^2}\log n}\right)\\
&\overset{(a)}{\geq} K_7'\frac{\min\{G_k,M\}}{(\log n)^7}
\end{align*}
for some constant $K_7'$ with high probability, where $(a)$ is from the choice of $P'$ in (\ref{eqn:P}). Furthermore, because $\frac{D^2}{\lambda L}=O(\lambda^{-1})$ and $\frac{D^2}{\lambda L}=\Omega(\frac{n_{k-1}}{\lambda (n_kn)^{\frac{1}{2}}})$, we have $M=\Omega(\frac{n_{k-1}}{ (n_kn)^{\frac{1}{2}}\lambda\log\lambda^{-1} })$. Hence, we have $C(H)=\Omega(G_k/(\log n)^7)$, which proves Lemma \ref{lemma:Gk}.
\end{proof}

Now, by substituting $G_k$ and $K_2''(\log n)^7$ for $m_{k-1}$ and $m_{k-1}/R_k$ in (\ref{eqn:Tk}), where $K_2''$ is a positive constant independent of both $n$ and $\lambda$, we have the recursive form of $T_k(n_k,\lambda)$ for $k\in [1:h]$ for the modified HC scheme given as
\begin{align}
T_{k}(n_k,\lambda) \geq \frac{K''_3}{(\log n)^7}\frac{n_{k}G_k}{n_{k-1}G_k/T_{k-1}(n_{k-1},\lambda) + n_k} \label{eqn:Tk2}
\end{align}
where $K_3''$ is a positive constant independent of both $n$ and $\lambda$.

The following lemma gives an achievable throughput scaling using  the modified HC scheme with $h$ hierarchy levels when we choose $(n_0, n_1, ..., n_{h})$ that maximizes (\ref{eqn:Tk2}) for $k\in[1:h]$. The proof is at the end of the present section.
\begin{lemma}\label{lemma:throughput_h}
In dense networks, the modified HC scheme with $h$ hierarchy levels achieves
\begin{align*}
T_h(n,\lambda)\geq \frac{C_h}{(\log n)^{7h+1}}\frac{n^{\delta_{b(n,\lambda,h)}}}{(\lambda \log \lambda^{-1})^{\tau_{b(n,\lambda,h)}}}
\end{align*}
with high probability, where $C_h$ is a positive constant independent of both $n$ and $\lambda$,
\begin{align*}
&b(n,\lambda,h)\cr
&\triangleq\begin{cases}
h+1&\mbox{if } \log_n (\lambda\log\lambda^{-1})\leq -\Lambda(h)   \\
k  &\mbox{if } -\Lambda(k)< \log_n (\lambda\log\lambda^{-1})\leq -\Lambda(k-1) \\
&\mbox{~~~~~~~~~~~~~~~~~~~~~~~~~~for some }k\in [2:h] \\
1  &\mbox{if } -\Lambda(1)< \log_n (\lambda\log\lambda^{-1})
\end{cases}, \end{align*}
and
\begin{align*}
\delta_{u}\triangleq\frac{u2^{h-u}}{3^{1+h-u} + 2^{h-u}(u-1)},~\tau_{u}\triangleq\frac{3^{1+h-u} - 2^{1+h-u}}{3^{1+h-u} + 2^{h-u}(u-1)}
\end{align*}
for $u\in [1:h+1]$, where $\Lambda(v)\triangleq  \frac{3^{h-v}(3+v)-2^{h-v}}{3^{h-v}(4+v)-2^{1+h-v}}$ for $v\in [1:h]$.
\end{lemma}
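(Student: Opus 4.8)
The plan is to analyze the recursion (\ref{eqn:Tk2}) together with the base case supplied by the multihop-via-percolation scheme of \cite{Franceschetti:07}, which achieves $T_0(n_0,\lambda)=\Theta(\sqrt{n_0})$ for a cluster of $n_0$ nodes, and with the per-level rate $R_k=\Theta(G_k/(\log n)^7)$ from Lemma \ref{lemma:Gk}. To expose the structure I would pass to exponents: writing $n_k=n^{a_k}$ with $0\le a_0\le a_1\le\cdots\le a_h=1$, setting $\gamma\triangleq -\log_n(\lambda\log\lambda^{-1})$, and letting $t_k\triangleq \log_n T_k$ modulo the polylogarithmic factors, the definition of $G_k$ gives $g_k\triangleq\log_n G_k=a_{k-1}+\min\{0,\gamma-\tfrac{a_k+1}{2}\}$, and (\ref{eqn:Tk2}) collapses (since a sum of positive monomials scales as its maximum) to the one-line recursion $t_k=\min\{t_{k-1}+a_k-a_{k-1},\,g_k\}$ with $t_0=a_0/2$. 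The target bound is exactly $t_h=\delta_b+\tau_b\gamma$, since $(\lambda\log\lambda^{-1})^{-\tau_b}=n^{\tau_b\gamma}$, while the factor $(\log n)^{7h+1}$ is tracked separately by multiplying the per-level constant $K_3''/(\log n)^7$ across the $h$ levels.

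Because the lemma asserts a lower bound and we are free to select the hierarchy sizes, it suffices to \emph{exhibit} one good choice of $(a_0,\ldots,a_h)$ rather than prove global optimality. I would take the balanced choice, in which both arguments of every $\min$ are made equal so that neither the MIMO phase nor the distribute/collect phases is wasteful; this forces $t_k=g_k$ at each level. Since $(n_kn)^{1/2}=n^{(a_k+1)/2}$ is increasing in $k$, the DoF constraint $\gamma<\tfrac{a_k+1}{2}$ binds only at the top of the hierarchy: there is a threshold level $p=b-1$ such that levels $1,\ldots,p$ are not DoF-limited ($g_k=a_{k-1}$) while levels $p+1,\ldots,h$ are DoF-limited ($g_k=a_{k-1}+\gamma-\tfrac{a_k+1}{2}$). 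Imposing balance then yields a second-order linear recurrence for the $a_k$ in each region: in the non-limited region the sequence is arithmetic, $a_k=2a_{k-1}-a_{k-2}$, while in the DoF-limited region it satisfies $3a_k=5a_{k-1}-2a_{k-2}$, whose characteristic equation $(3r-2)(r-1)=0$ has roots $1$ and $2/3$. These roots are precisely the source of the $3^{h-u}$ and $2^{h-u}$ appearing in $\delta_u$ and $\tau_u$.

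I would then solve this piecewise recurrence under the boundary data $a_h=1$, the bottom condition $t_0=a_0/2$, and the matching of the two pieces at level $p$, and read off $t_h$. Substituting the closed-form solution into $t_k=g_k$ gives $t_h=\delta_{p+1}+\tau_{p+1}\gamma$, matching the claimed exponents with $b=p+1$. The role of $\gamma$ (hence of $\lambda$) is to select the regime: as $\gamma$ decreases, more top levels are forced to be DoF-limited, so $b$ decreases from $h+1$ (no DoF limitation, $\tau_{h+1}=0$, recovering the classical $t_h=(h+1)/(h+2)$ of \cite{Ozgur:07}) down to $1$ (fully DoF-limited, where $\delta_1=2^{h-1}/3^h$ and $\tau_1=1-(2/3)^h$ give $t_h\to\gamma$ as $h\to\infty$, i.e.\ $T_h\to\lambda^{-1}$). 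The transition values $\Lambda(v)$ are obtained as the $\gamma$ at which the DoF constraint at the critical level becomes tight, i.e.\ $a_k=2\gamma-1$, together with the feasibility requirements $a_0\ge 0$ and monotonicity $a_{k}\ge a_{k-1}$.

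The main obstacle is this piecewise optimization bookkeeping: correctly determining, as a function of $\gamma$, how many top levels are DoF-limited, deriving the thresholds $\Lambda(v)$ from the feasibility boundaries, and solving the two-root recurrence with the matching condition to extract the exact $\delta_b$ and $\tau_b$. The recurrence algebra and the case analysis over $b\in[1:h+1]$ are lengthy but mechanical; the genuinely delicate point is verifying that the balanced sequence stays feasible ($0\le a_0\le\cdots\le a_h=1$ with the assumed DoF pattern) exactly on each interval $\Lambda(b-1)\le\gamma<\Lambda(b)$, since it is this feasibility window that pins down the definition of $b(n,\lambda,h)$.
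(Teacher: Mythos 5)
Your proposal is correct and follows essentially the same route as the paper: both choose the cluster sizes to balance the two terms of the recursion (\ref{eqn:Tk2}), both exploit the fact that the DoF constraint can only bind on the top levels (the paper's assumption that $G_k=n_{k-1}$ for $k\in[1:h'-1]$ and $G_k=n_{k-1}/((n_kn)^{1/2}\lambda\log\lambda^{-1})$ for $k\in[h':h]$), both solve the resulting recurrence with boundary data $a_h=1$, the percolation base case, and matching at the threshold level, and both pin down the intervals $\Lambda(v)$ by checking consistency of the assumed DoF pattern. The only difference is bookkeeping: you track cluster-size exponents $a_k$ via a second-order linear recurrence with roots $1$ and $2/3$, whereas the paper tracks throughput exponents $(\alpha_{h',k},\beta_{h',k})$ via an equivalent first-order recursion, yielding the same closed forms.
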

The following corollary is obtained straightforwardly from Lemma \ref{lemma:throughput_h}.
\begin{corollary} \label{corollary:hc}
In dense networks, the modified HC scheme with $h$ hierarchy levels achieves
\begin{align*}
T_h(n,\lambda)\geq \frac{C_h'}{(\log n)^{7h+1}} \min\left\{\frac{n^{\delta_k}}{\left(\lambda\log\lambda^{-1}\right)^{\tau_k}}\bigg|1\leq k\leq h+1\right\}
\end{align*}
with high probability,  where $C_h'$ is a positive constant independent of both $n$ and $\lambda$.
\end{corollary}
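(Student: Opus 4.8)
The plan is to obtain Corollary~\ref{corollary:hc} directly from Lemma~\ref{lemma:throughput_h}, the point being that the single term appearing in the Lemma is always one of the $h+1$ terms over which the Corollary minimizes. First I would observe that the index $b(n,\lambda,h)$ is always admissible: by its very definition it takes one of the values $h+1$, some $k\in[2:h]$, or $1$, so $b(n,\lambda,h)\in[1:h+1]$ in every case. Consequently the term selected by $b(n,\lambda,h)$ dominates the minimum,
\[
\frac{n^{\delta_{b(n,\lambda,h)}}}{(\lambda\log\lambda^{-1})^{\tau_{b(n,\lambda,h)}}}\;\geq\;\min\left\{\frac{n^{\delta_k}}{(\lambda\log\lambda^{-1})^{\tau_k}}\;\bigg|\;1\leq k\leq h+1\right\}.
\]
Chaining this with Lemma~\ref{lemma:throughput_h} and taking $C_h'=C_h$ yields the claimed bound; the factor $(\log n)^{-(7h+1)}$ passes through untouched, and $C_h'$ inherits its independence of $n$ and $\lambda$ from $C_h$. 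No further estimation is needed for the Corollary as stated.

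It is worth recording why the minimum form is the natural one, namely that the displayed inequality is in fact an \emph{equality}: the term picked out by $b(n,\lambda,h)$ is exactly the minimizer. To see this I would set $x\triangleq\log_n(\lambda\log\lambda^{-1})$ and rewrite each competitor as $n^{\delta_k-\tau_k x}$; since $n>1$, minimizing over $k$ is equivalent to minimizing the affine functions $g_k(x)\triangleq\delta_k-\tau_k x$. Their pointwise minimum is a piecewise-linear concave function of $x$, and because the exponents satisfy $\tau_{h+1}=0<\tau_h<\cdots<\tau_1=1-(2/3)^h$ (so the slopes $-\tau_k$ are strictly increasing in $k$), the minimizing index is a monotone step function of $x$ that descends from $k=h+1$ at very negative $x$ to $k=1$ near $x=0$. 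This is precisely the descending pattern encoded in the definition of $b(n,\lambda,h)$ through the thresholds $-\Lambda(v)$.

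The one genuine computation—and the only place any work is hidden—is to verify that the crossover abscissa of the consecutive lines $g_v$ and $g_{v+1}$,
\[
x=\frac{\delta_v-\delta_{v+1}}{\tau_v-\tau_{v+1}},
\]
coincides with the threshold $-\Lambda(v)$ for each $v\in[1:h]$. This is a routine algebraic identity in the closed forms for $\delta_u$, $\tau_u$, and $\Lambda(v)$, and establishing it would upgrade the Corollary's one-sided bound to an exact scaling exponent. For the Corollary as stated, however, it is not required: the admissibility $b(n,\lambda,h)\in[1:h+1]$ alone delivers the one-sided inequality above, which is all the lower bound needs. Thus the main obstacle is essentially absent, consistent with the claim that the result follows \emph{straightforwardly} from Lemma~\ref{lemma:throughput_h}.
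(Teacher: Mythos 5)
Your proposal is correct and coincides with the paper's own (implicit) argument: the paper proves the corollary only by remarking that it follows straightforwardly from Lemma~\ref{lemma:throughput_h}, and the substance of that step is exactly your observation that $b(n,\lambda,h)\in[1:h+1]$ always, so the lemma's single term dominates the minimum over all $h+1$ terms and one may take $C_h'=C_h$. Your additional discussion of when the inequality is an equality (the crossover points matching $-\Lambda(v)$) is correctly flagged as unnecessary for the stated one-sided bound.
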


Now we are ready to prove Theorem \ref{thm:dense}.

\subsubsection*{Proof of Theorem \ref{thm:dense}}
Fix $\epsilon>\epsilon'>0$. Let $h$  be the smallest integer such that $h>\frac{8}{\epsilon'}$ and let $n$ be the smallest integer such that $(7h+1)\log_n\log n<\frac{\epsilon'}{4}$ and $\log_{n^{1/2}}\log n^\mu<\frac{\epsilon-\epsilon'}{1-\epsilon'}$. Let us  define functions $y_0(x)$ and $y_{k}(x)$ for $k\in [1: h+1]$ for $x\leq -\frac{1}{2}$ as
\begin{align*}
y_0(x)&=(1-\epsilon')\min\{-x,1\},\\
y_{k}(x)&= \delta_{k}-\tau_{k}x-(7h+1)\log_n\log n.
\end{align*}

Fix $k\in [1:h+1]$. We will show that $y_{k}(x)$ is larger than $y_0(x)$ for all $x\leq -\frac{1}{2}$.  Let us first show that $y_{k}(-1)>1-\frac{\epsilon'}{2}$. $y_{k}(-1)$ is given as
\begin{align*}
y_{k}(-1)&=\delta_{k}+\tau_{k}-(7h+1)\log_n\log n \\
&> \delta_{k}+\tau_{k}-\frac{\epsilon'}{4} \\
&=1-\frac{1}{3\left(\frac{3}{2}\right)^{h-k}+k-1}-\frac{\epsilon'}{4}.
\end{align*}
If $1\leq k<\frac{h}{2}$, we have
\begin{align*}
3\left(\frac{3}{2}\right)^{h-k}  +k-1\geq 3\left(\frac{3}{2}\right)^{h-k} > 3\left(\frac{3}{2}\right)^{\frac{h}{2}}>\frac{h}{2}>\frac{4}{\epsilon'}.
\end{align*}
If $\frac{h}{2}\leq k\leq h+1$, we get
\begin{align*}
3\left(\frac{3}{2}\right)^{h-k}+k-1>k \geq \frac{h}{2}>\frac{4}{\epsilon'}.
\end{align*}
Thus, we conclude that $y_{k}(-1)>1-\frac{\epsilon'}{2}$. Now we are ready to show $y_{k}(x)> y_0(x)$ for all $x\leq -\frac{1}{2}$. Note that $0\leq \tau_{k}<1$. For $x< -1$,
\begin{align*}
y_{k}(x) = y_{k} (-1)-\tau_{k}(x+1) \geq  y_{k} (-1) > 1-\epsilon' = y_0(x).
\end{align*}
For $-1\leq x\leq -\frac{1}{2}$,
\begin{align*}
y_{k}(x)&=y_{k}(-1)-\tau_{k}(x+1)\cr
&\geq y_{k}(-1)-(x+1)\cr
&>-(1-\epsilon')x\cr
&=y_0(x).
\end{align*}
Hence, we prove that $\min\left\{y_{k}(x):1\leq k\leq h+1\right\}>y_0(x)$ for all $x\leq -\frac{1}{2}$.
By letting $x=\log_n(\lambda\log\lambda^{-1})$, we equivalently prove that the achievable rate of the modified HC scheme with $h$ hierarchy levels in Corollary \ref{corollary:hc} is lower-bounded as
\begin{align*}
T_h(n, \lambda)&\geq  \frac{C_h'}{(\log n)^{7h+1}} \min\left\{\frac{n^{\delta_k}}{\left(\lambda\log\lambda^{-1}\right)^{\tau_k}}\bigg|1\leq k\leq h+1\right\}\\
&>C_h'\min\left\{\frac{\lambda^{-1}}{\log\lambda^{-1}},n\right\}^{1-\epsilon'}.
\end{align*}
Now, we have
\begin{align*}
T_h(n, \lambda)&>C_h'\min\left\{\frac{\lambda^{-1}}{\log\lambda^{-1}},n\right\}^{1-\epsilon'}\\
&=C_h'\min\left\{(\lambda^{-1})^{1-\log_{\lambda^{-1}} \log\lambda^{-1}},n\right\}^{1-\epsilon'}\\
&\overset{(a)}{>}C_h'\min\left\{(\lambda^{-1})^{1-\log_{n^{1/2}} \log n^{\mu}},n\right\}^{1-\epsilon'}\\
&>C_h'\min\left\{\lambda^{-1},n\right\}^{1-\epsilon}
\end{align*}
where $(a)$ is because $n^{-\mu}<\lambda<n^{-1/2}$.
Hence, Theorem \ref{thm:dense} is proved. \endproof

\subsection{Extended network}
In extended networks, both $\sqrt{G}$ and the distance between nodes is increased by a factor of $\sqrt{n}$ as compared to those in dense networks. Hence, for the same transmit power, the received power at each node remains the same as in dense networks. By rescaling the space, let us consider an extended network as an equivalent dense network on a unit area but with the wavelength reduced to $\lambda n^{-1/2}$.
Since the wavelength is given as $\lambda n^{-1/2}$ in the equivalent dense network, Theorem \ref{thm:extended} is proved.

\subsubsection*{Proof of Lemma~\ref{lemma:throughput_h}}
First, consider the case of $b(n,\lambda,h)=h+1$. Since $\Lambda(h)=1$, this implies $\lambda\log\lambda^{-1}\leq n^{-1}$. In this case, $G_k$ is $n_{k-1}$, and hence, the recursive form of $T_k(n_k,\lambda)$  in (\ref{eqn:Tk2}) becomes
\begin{align}
T_{k}(n_k,\lambda) &\geq \frac{K_3''}{(\log n)^7}\frac{n_{k-1}n_k}{n_{k-1}^2/T_{k-1}(n_{k-1},\lambda) + n_k}\label{eqn:recursive_linear}
\end{align}
for all $k\in [1:h]$. Note that $T_0(n_0,\lambda)=\Theta(\frac{\sqrt{n_0}}{\log n})$ by using the multihop via percolation theory~\cite{Franceschetti:07} for the cooperation for the clusters of $n_0$ nodes.\footnote{In \cite{Franceschetti:07}, a path-loss exponent larger than two is considered and a multihop via percolation theory is shown to achieve $\Theta(\sqrt{n})$. For the path-loss exponent equal to two, however, it achieves $\Theta(\frac{\sqrt{n}}{\log n})$ due to the interference power proportional to $\log n$. } By choosing $n_{k-1}=n_k^{\frac{k+1}{k+2}}$ that maximizes (\ref{eqn:recursive_linear}) for $k\in [1:h]$, $T_h(n,\lambda)\geq \frac{C_h}{(\log n)^{7h+1}}n^{\frac{h+1}{h+2}}$ is obtained. Because $\delta_{h+1}=\frac{h+1}{h+2}$ and $\tau_{h+1}=0$, Lemma \ref{lemma:throughput_h} is proved for the case of $b(n,\lambda,h)=h+1$.

Next, consider the case of $b(n,\lambda,h)=h'$ for some $h'\in [1:h]$. Let us first assume that $G_k$ is $n_{k-1}$ for $k\in [1:h'-1]$ and is $\frac{n_{k-1}}{(n_kn)^{\frac{1}{2}}\lambda\log \lambda^{-1}}$ for $k\in [h':h]$. For the choice of $n_0,n_1,...,n_{h-1}$ that maximizes (\ref{eqn:Tk2}) under this assumption, we will show that the range of $\lambda$ where the assumption is valid is the same as the range of $\lambda$ corresponding to $b(n,\lambda,h)=h'$ in Lemma \ref{lemma:throughput_h}.

Since $G_k$ is assumed to be $n_{k-1}$ for $k\in [1: h'-1]$, we obtain
\begin{align}
T_{h'-1}(n_{h'-1},\lambda)\geq \frac{C_{h'-1}}{(\log n)^{7(h'-1)+1}}n_k^{\frac{h'}{h'+1}}. \label{eqn:th_h_1}
\end{align}
For $k\in [h':h]$, $G_k$ is assumed to be $\frac{n_{k-1}}{(n_kn)^{\frac{1}{2}}\lambda\log \lambda^{-1}}$, and hence, the recursive form of $T_k(n_k,\lambda)$  in (\ref{eqn:Tk2}) is given as
\begin{align}
&T_{k}(n_k,\lambda) \geq \frac{K_3''}{(\log n)^7}\cr
& ~~\times\frac{n_{k-1}n_k}{n_{k-1}^2/T_{k-1}(n_{k-1},\lambda) + n_k^{3/2} ((\lambda\log\lambda^{-1})^2n)^{\frac{1}{2}} }.\label{eqn:recursive_bottleneck}
\end{align}
Let us assume that $T_{k}(n_k,\lambda)$ for $k\in [h'-1:h]$ has the form of $\frac{C_{k}}{(\log n)^{7k+1}}\frac{n_{k}^{\alpha_{h',k}}}{\left(\left(\lambda\log\lambda^{-1}\right)^2n\right)^{\beta_{h',k}}}$
for some positive constants $C_{k}$, $\alpha_{h',k}$, and $\beta_{h',k}$ independent of both  $n$ and $\lambda$. Then, the recursive formulas $\alpha_{h',k}=\frac{\alpha_{h',k-1}+1}{2(2-\alpha_{h',k-1})}$ and
$\beta_{h',k}=\frac{1-\alpha_{h',k-1}+2\beta_{h',k-1}}{2(2-\alpha_{h',k-1})}$ are obtained by choosing $n_{k-1}$ as
\begin{align*}
n_{k-1}=n_k^{\frac{3}{2(2-\alpha_{h',k-1})}}((\lambda\log\lambda^{-1})^2n)^{\frac{1-2\beta_{h',k-1}}{2(2-\alpha_{h',k-1})}}
\end{align*}
that maximizes (\ref{eqn:recursive_bottleneck}) for $k\in[h':h]$. Using the conditions $\alpha_{h',h'-1}=\frac{h'}{h'+1}$ and $\beta_{h',h'-1}=0$ from (\ref{eqn:th_h_1}), $\alpha_{h',k}$ and $\beta_{h',k}$ for $k\in [h':h]$ are given as
\begin{align*}
\alpha_{h',k}&=\frac{3^{1+k-h'} + 2^{1+k-h'}(h'-1)}{3^{1+k-h'}2 + 2^{1+k-h'}(h'-1)},\\
\beta_{h',k}&=\frac{3^{1+k-h'} - 2^{1+k-h'}}{3^{1+k-h'}2 + 2^{1+k-h'}(h'-1)}.
\end{align*}
Because $n_h=n$, $n_k$ for $k\in [h'-1: h]$ is given as (\ref{eqn:n_k_explicit}).
\begin{figure*}\hrule
\begin{align}
n_k&=n^{\prod_{j=k+1}^h\left(\frac{3}{2(2-\alpha_{h',j-1})}\right)}\left(\left(\lambda \log\lambda^{-1}\right)^2n\right)^{\sum_{j={k+1}}^h\left(\frac{1-2\beta_{h',j-1}}{2(2-\alpha_{h',j-1})}\right)\prod_{i=k+1}^{j-1}\left(\frac{3}{2(2-\alpha_{h',i-1})}\right)}\cr
&=n^{\frac{3^{1+h-h'}+h'2^{1+k-h'} 3^{h-k}-2^{h-h'}(1+h')}{3^{1+h-h'}+2^{h-h'}(-1+h')}}\left(\lambda\log\lambda^{-1}\right)^{\frac{\left(2^{1+k-h'} 3^{h-k}-2^{1+h-h'}\right)(1+h')}{3^{1+h-h'}+2^{h-h'}(-1+h')}} \label{eqn:n_k_explicit}
\end{align}\hrule
\end{figure*}

Now, the range of $\lambda$ that makes the assumption, i.e., $G_k$ is $n_{k-1}$ for $k\in [1:h'-1]$ and is $\frac{n_{k-1}}{(n_kn)^{\frac{1}{2}}\lambda\log \lambda^{-1}}$ for $k\in [h':h]$, valid is given as
\begin{align}
\begin{cases}
(nn_{h'})^{-1/2}< \lambda \log \lambda^{-1} \leq (nn_{h'-1})^{-1/2} & \mbox{ if $h'\in [2:h]$,} \\
(nn_{1})^{-1/2}< \lambda \log \lambda^{-1} & \mbox{ if $h'= 1$.}
\end{cases}\label{eqn:bottleneck}
\end{align}
By using $n_{h'}$ and $n_{h'-1}$ from (\ref{eqn:n_k_explicit}), we can show that the range of $\lambda$ in (\ref{eqn:bottleneck}) is the same as the range of $\lambda$ corresponding to $b(n,\lambda,h)=h'$ in Lemma \ref{lemma:throughput_h}.
Hence, we prove that for $b(n,\lambda,h)=h'$, the modified HC scheme with $h$ levels achieves
\begin{align*}
T_h(n,\lambda)&\geq \frac{C_h}{(\log n)^{7h+1}}\frac{n^{\alpha_{h',h}}}{((\lambda \log \lambda^{-1})^2n)^{\beta_{h',h}}}\cr
&=\frac{C_h}{(\log n)^{7h+1}}\frac{n^{\alpha_{h',h}-\beta_{h',h}}}{(\lambda \log \lambda^{-1})^{2\beta_{h',h}}}.
\end{align*}
Since $\delta_{h'}=\alpha_{h',h}-\beta_{h',h}$ and $\tau_{h'}=2\beta_{h',h}$, Lemma \ref{lemma:throughput_h} is proved for the case of $b(n,\lambda,h) \in [1:h]$.
\endproof

\section{Conclusion}\label{sec:conclusion}
We characterized the information-theoretic capacity scaling of wireless ad hoc networks from Maxwell's equations without any artificial assumptions. The capacity scaling is given as the minimum of the number of nodes and the DoF limit given as the ratio of the network diameter and the wavelength. Accordingly, a network becomes DoF-limited if $\lambda=\Omega(n^{-1})$ in dense networks and $\lambda=\Omega(n^{-1/2})$ in extended networks. Our results indicate that the linear throughput scaling in \cite{Ozgur:07} that was shown under the i.i.d. channel phase assumption is indeed achievable to within an arbitrarily small exponent in the non DoF-limited regime. In the DoF-limited regime, the DoF limit characterized by Franceschetti et al. in \cite{Franceschetti:09} that generally has higher scaling than that of the multihop scheme can be achieved to within an arbitrarily small exponent by using the modified HC scheme.

We also considered a channel model with a path-loss exponent $\alpha$ larger than two. In dense networks, the throughput scaling using the modified HC scheme for $\alpha>2$ remains the same as when $\alpha=2$. However, the throughput scaling using the modified HC scheme is decreased for $\alpha>2$ in extended networks due to the power limitation. This suggests, as a further work, an upper bound considering both the DoF limitation due to the channel correlation and the power limitation due to the power attenuation over the distance.

\appendices
\section{Proof of Theorem \ref{theorem:MIMO}}
\label{appendix:MIMO}
The capacity $C(H)$ of the MIMO channel from $C_T$ to $C_R$ is lower-bounded as
\begin{align}
C(H)&=\max_{f(x):\E[|X_i|^2]\leq P}I(X;Y)\cr
&\overset{(a)}{\geq} I(X_{G};Y) \cr
&\overset{(b)}{\geq} \log \frac{\det(I+\Sigma+PHH^*)}{\det(I+\Sigma)}\cr
&=\log \frac{\prod_{i=1}^N(1+\frac{GP}{L^2}\kappa_i)}{\prod_{i=1}^N(1+\frac{GP}{L^2} \chi_i)}\cr
&\overset{(c)}{\geq}\log \prod_{i=1}^N\frac{(1+\frac{GP}{L^2}\kappa_i)}{1+\frac{GP}{L^2} \E[\chi]}\cr
&=\sum_{i=1}^N\log \frac{1+\frac{GP}{L^2}\kappa_i}{1+\frac{GP}{L^2} \E[\chi]}\cr
&= N \E\left[\log \frac{1+\frac{GP}{L^2}\kappa}{1+\frac{GP}{L^2} \E[\chi]}\right]\cr
&\geq N\P\left(\kappa>(1-\delta)\E[\kappa]\right)\log \frac{1+\frac{GP}{L^2}(1-\delta)\E[\kappa]}{1+\frac{GP}{L^2} \E[\chi]}\cr
&\overset{(d)}{=} N\P\left(\kappa>(1-\delta)\E[\kappa]\right)\cr
&~\times \log \left(1+ \frac{\frac{GP}{L^2}((1-\delta)\E[\gamma]-\delta\E[\chi])}{1+\frac{GP}{L^2} \E[\chi]}\right)\cr
&\overset{(e)}{\geq} N\frac{\delta^2\E[\kappa]^2}{\E[\kappa^2]}\log\left(1+ \frac{\frac{GP}{L^2}((1-\delta)\E[\gamma]-\delta\E[\chi])}{1+\frac{GP}{L^2} \E[\chi]}\right)\cr
&\overset{(f)}{\geq}N\frac{\delta^2(\E[\chi]+\E[\gamma])^2}{(\E^{1/2}[\chi^2]+\E^{1/2}[\gamma^2])^2}\cr
&~\times\log\left(1+ \frac{\frac{GP}{L^2}((1-\delta)\E[\gamma]-\delta\E[\chi])}{1+\frac{GP}{L^2} \E[\chi]}\right)\label{eqn:slow5}
\end{align}
for any $0\leq \delta \leq 1$, where $X_{G}$ is $\mathcal{CN}(0, PI)$, $\kappa$ is chosen uniformly among the eigenvalues $\kappa_i, i=1,\ldots, N$ of $\frac{L^2}{GP}(\Sigma+PHH^*)$, $\chi$ is chosen uniformly among the eigenvalues $\chi_i, i=1,\ldots, N$ of $\frac{L^2}{GP}\Sigma$, and $\gamma$ is chosen uniformly among the eigenvalues $\gamma_i, i=1,\ldots, N$ of $\frac{L^2}{G}HH^*$. $(a)$ is from choosing the input $X$ as $X_{G}$, $(b)$ is because assuming Gaussian interference minimizes the mutual information for given noise and interference covariance matrices \cite{Ihara:78, Cover:01}, $(c)$ is because the geometric mean is upper-bounded by the arithmetic mean, $(d)$ is because $\E[\kappa]=\E[\chi]+\E[\gamma]$, $(e)$ is from the Paley-Zygmund inequality~\cite{Ozgur:07,Kahane:85}, and $(f)$ is from Lemma \ref{lemma:tr_inequality}.

Note that $\E[\chi]=\rho_1$ and $\E[\chi^2]=\rho_2$. To get a lower bound on (\ref{eqn:slow5}), we need a lower bound on $\E[\gamma]$ and an upper bound on $\E[\gamma^2]$.  Let $F\triangleq \frac{L}{\sqrt{G}}H$. Then, $F_{ik}=a_{ik}\exp(-j2\pi\frac{d_{ik}}{\lambda})$, where $a_{ik}=\frac{L}{d_{ik}}$. Note that constants $a_{\min}$ and $a_{\max}$ exist independent of $D$ and $L$ such that $a_{\min}\leq a_{ik}\leq a_{\max}$ for all $i,k\in [1:N]$. First, $\E[\gamma]$ is given as
\begin{align*}
\E[\gamma]&=\frac{1}{N}\tr\left(FF^{*}\right)\cr
&=\frac{1}{N}\sum^{N}_{i,k=1}|F_{ik}|^2.
\end{align*}
Since $a_{\min}^2\leq |F_{ik}|^2\leq a_{\max}^2$, we have $\E[\gamma]=\Theta(N)$.

Next, $\E[\gamma^2]$ is upper-bounded as
\begin{align*}
\E[\gamma^2]&=\frac{1}{N}\tr\left(FF^{*}FF^{*}\right)\cr
&=\frac{1}{N}\sum^{N}_{i,j,k,l=1}F_{ik} F_{il}^*F_{jl}F_{jk}^*\cr
&=\frac{1}{N}\sum_{(i,j,k,l)\in \Psi_1}F_{ik} F_{il}^*F_{jl}F_{jk}^*\cr
&~+\frac{1}{N}\sum_{(i,j,k,l)\in\Psi_2}F_{ik}F_{il}^*F_{jl}F_{jk}^*\cr
&\leq a_{\max}^4(2N^2-N)+\frac{1}{N}\sum_{(i,j,k,l)\in\Psi_2}F_{ik} F_{il}^*F_{jl}F_{jk}^*\cr
&=a_{\max}^4(2N^2-N)+\frac{4}{N}\sum^{N}_{i,j,k,l=1\atop i<j, k<l}Q_{ijkl}
\end{align*}
where $\Psi_1\triangleq \{(i,j,k,l)|i,j,k,l\in [1:N], i=j \mbox{ or } k=l\}$, $\Psi_2\triangleq \{(i,j,k,l)|i,j,k,l\in [1:N], i\neq j \mbox{ and } k\neq l\}$, and  $Q_{ijkl}\triangleq a_{ik}a_{il}a_{jk}a_{jl}\cos\left(\frac{2\pi}{\lambda}(d_{ik}-d_{il}-d_{jk}+d_{jl})\right)$.

Note that $Q_{ijkl}$'s for all $1\leq i<j\leq N$ and $1\leq k<l \leq N$ follow an identical distribution, but they are not necessarily independent of each other. Nevertheless, $\frac{4}{N^2(N-1)^2}\sum_{i,j,k,l=1\atop i<j, k<l}^NQ_{ijkl}$ strongly converges to $\E[Q_{1212}]$ as the following lemma shows, where the expectation is over uniform node distributions.
\begin{lemma}
\label{lemma:converge}
The sample mean $\frac{4}{N^2(N-1)^2}\sum^{N}_{i,j,k,l=1 \atop i<j, k<l} Q_{ijkl}$ strongly converges to $\E[Q_{1212}]$. That is,
\begin{align*}
\P\left(\lim_{N\rightarrow \infty}\frac{4}{N^2(N-1)^2}\sum_{i,j,k,l=1\atop i<j, k<l}^N Q_{ijkl} =\E[Q_{1212}]\right)=1 .
\end{align*}
\end{lemma}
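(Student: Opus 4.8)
The plan is to establish this strong law of large numbers by the \emph{fourth-moment method} combined with the Borel--Cantelli lemma, exploiting two structural facts. First, the node positions are i.i.d.\ and the receiver nodes (indices $i,j$, in $C_R$) are physically distinct from the transmitter nodes (indices $k,l$, in $C_T$); since every term in the sum has $i\neq j$ and $k\neq l$, each $Q_{ijkl}$ is a function of four distinct, independent node positions and is therefore distributed identically to $Q_{1212}$. Second, $Q_{ijkl}$ is uniformly bounded, $|Q_{ijkl}|\leq a_{\max}^4$, so all its centered moments are finite and bounded independently of $N$. Writing $\bar Q_N$ for the sample mean and $\mu=\E[Q_{1212}]$, the first step is then immediate: the average runs over $\binom{N}{2}^2$ identically distributed terms, so $\E[\bar Q_N]=\mu$ exactly.

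The heart of the argument is to show $\E[(\bar Q_N-\mu)^4]=O(N^{-2})$. Expanding, this equals $\binom{N}{2}^{-8}$ times a sum over four index-quadruples $a,b,c,d$ of $\E[\tilde Q_a\tilde Q_b\tilde Q_c\tilde Q_d]$, where $\tilde Q=Q-\mu$ is centered and bounded. The key observation is that, by independence across nodes, two factors are independent unless their quadruples share a node (a receiver or a transmitter index); consequently a term vanishes whenever one factor shares no node with the other three, because that factor then contributes the isolated expectation $\E[\tilde Q]=0$. Hence only those terms survive whose ``dependency graph'' on the four factors has no isolated vertex. I would enumerate these: the dominant family is two disjoint pairs, say $\{a,b\}$ and $\{c,d\}$, where the two factors in each pair share exactly one node while the pairs share none. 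Such a configuration involves only $14$ distinct node indices rather than the $16$ of a fully independent quadruple, so it accounts for $O(N^{14})$ terms; every more connected pattern (a spanning-tree-connected graph, or a pair sharing more than one node) uses strictly fewer free indices and is of smaller order. Dividing the leading $O(N^{14})$ count by the normalization $\binom{N}{2}^{8}=\Theta(N^{16})$ yields the claimed $O(N^{-2})$ bound.

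With the fourth-moment bound in hand, the final step is routine. For any $\epsilon>0$, Markov's inequality gives $\P(|\bar Q_N-\mu|>\epsilon)\leq \E[(\bar Q_N-\mu)^4]/\epsilon^4=O(N^{-2}\epsilon^{-4})$, which is summable in $N$; Borel--Cantelli then yields $\P(|\bar Q_N-\mu|>\epsilon\text{ infinitely often})=0$ for every $\epsilon$, and taking $\epsilon$ along a sequence tending to $0$ gives $\bar Q_N\to\mu$ almost surely, as asserted.

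I expect the main obstacle to be the combinatorial bookkeeping of the fourth central moment: identifying precisely which overlap patterns of the four index-quadruples yield a nonvanishing expectation and counting them to the correct order. The essential idea that makes the bound work is that centering annihilates every configuration containing a factor independent of the rest, which forces the surviving terms to ``waste'' at least two node indices on overlaps; this is exactly what costs the two powers of $N$ separating the $O(N^{-2})$ fourth moment (good enough for Borel--Cantelli and hence almost-sure convergence) from the merely $O(N^{-1})$ variance (good only for convergence in probability).
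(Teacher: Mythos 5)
Your proposal is correct, and it proves the lemma by a genuinely different route than the paper. The paper does not compute fourth moments: it linearly orders the centered terms $Q_{ijkl}-\E[Q_{1212}]$ into a single sequence $\{X_m\}$ (through the nested collections $U_w$ and $V_w$) and invokes a strong law of large numbers for weakly correlated sequences due to Lyons \cite{Lyons:88} (Theorem \ref{thm:conv}), whose hypothesis is a second-moment condition, $\sum_{K\geq 1}K^{-3}\E\big[\big|\sum_{m\leq K}X_m\big|^2\big]<\infty$. That condition is verified with exactly the structural fact you also exploit --- two centered terms are uncorrelated unless their quadruples share a receiver or a transmitter index, and such pairs are an asymptotically vanishing fraction --- plus some bookkeeping for values of $K$ that fall strictly between complete square arrays. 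Your fourth-moment computation replaces all of this: your enumeration of surviving overlap patterns is right (every component of the dependency graph must have at least two vertices, so the dominant pattern is two disjoint linked pairs with $14$ free indices, giving $O(N^{14})$ terms against the normalization $\Theta(N^{16})$, while connected patterns give only $O(N^{13})$), the terms are uniformly bounded by $a_{\max}^4$ so every expectation is $O(1)$, and Markov plus Borel--Cantelli then give almost-sure convergence of the sample mean directly at every $N$, with no interpolation between special values. The trade-off: the paper's argument needs only pairwise covariances but leans on an external theorem and an artificial ordering of a doubly-indexed array; yours is self-contained and works with the sample mean itself, at the price of the combinatorial accounting of the fourth central moment. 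One point to make explicit in a final write-up: the almost-sure statement presupposes the standard coupling in which the node positions form an infinite i.i.d.\ sequence and the $N$-node clusters use its first $N$ entries; the paper's construction rests on the same implicit convention.
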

The proof of the above lemma is given in Appendix \ref{appendix:lemma_converge}. Furthermore, the following lemma gives an upper bound on $\E[Q_{1212}]$, which is proved in Appendix \ref{appendix:mu}.
\begin{lemma}
\label{lemma:mu}$\E[Q_{1212}]=O(M^{-1})$.
\end{lemma}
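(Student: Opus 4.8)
The plan is to reduce the lemma to a single oscillatory integral and then average over the transverse separation of the two nodes, the latter step being what produces the logarithmic factor hidden in $M$. First I would fix coordinates, placing the two transmit nodes at $(x_1,y_1),(x_2,y_2)$ and the two receive nodes at $(L+u_1,v_1),(L+u_2,v_2)$, with all offsets independent and uniform on $[-D/2,D/2]$. Writing $\Delta \triangleq d_{11}-d_{12}-d_{21}+d_{22}$, the quantity to control is $\E[Q_{1212}]=\E[a_{11}a_{12}a_{21}a_{22}\cos(\tfrac{2\pi}{\lambda}\Delta)]$. Since $L\geq 2D$ forces every $d_{ik}=\Theta(L)$, each amplitude $a_{ik}=L/d_{ik}$ lies in $[a_{\min},a_{\max}]$ and varies slowly (its derivatives are $O(1/L)$ times bounded quantities), so the product $a_{11}a_{12}a_{21}a_{22}$ can be carried along as a smooth bounded factor and the problem is essentially to bound $\E[\cos(\tfrac{2\pi}{\lambda}\Delta)]$.

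Next I would expand $\Delta$ in the far field. Using $d_{ik}=\sqrt{(L+u_i-x_k)^2+(v_i-y_k)^2}=(L+u_i-x_k)+\tfrac{(v_i-y_k)^2}{2L}+\cdots$, the constant and linear terms cancel in the alternating sum and the leading surviving term is the mixed transverse one,
\[
\Delta = -\frac{(v_1-v_2)(y_1-y_2)}{L}+\big(\text{higher order}\big).
\]
Hence the phase $\tfrac{2\pi}{\lambda}\Delta$ ranges over an interval of size $\Theta\!\big(\tfrac{D^2}{\lambda L}\big)$, exactly the parameter $W\triangleq \tfrac{D^2}{\lambda L}$ appearing in $M$. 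When $W\leq 1$ the phase stays within a bounded window, so $\E[\cos]=\Theta(1)=O(M^{-1})$ trivially and all the work is in the regime $W>1$.

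For $W>1$ I would bound the expectation by a non-stationary-phase (van der Corput) estimate. Fixing all coordinates except $y_2$, only $d_{12},d_{22}$ depend on it, giving the exact expression
\[
\frac{\partial \Delta}{\partial y_2}=\frac{v_1-y_2}{d_{12}}-\frac{v_2-y_2}{d_{22}}=\frac{v_1-v_2}{L}+O\!\Big(\tfrac{D^2}{L^2}\Big),
\]
so the phase derivative is bounded below by $\Theta(|v_1-v_2|/L)$ whenever $|v_1-v_2|$ is not too small. Integrating $\cos(\tfrac{2\pi}{\lambda}\Delta)$ over the length-$D$ range of $y_2$ and dividing by $D$ then yields, by integration by parts, $\tfrac{1}{D}\big|\int \cos(\tfrac{2\pi}{\lambda}\Delta)\,dy_2\big| = O\!\big(\tfrac{\lambda L}{D\,|v_1-v_2|}\big)$, which, combined with the trivial bound $1$, gives $\min\{1,\tfrac{\lambda L}{D|v_1-v_2|}\}$ after averaging over the remaining transmit coordinates. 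Finally, averaging over $v_1,v_2$ (whose difference $w=|v_1-v_2|$ has density $\Theta(1/D)$ on $[0,D]$) with threshold $w^\ast=\lambda L/D$,
\[
\E[Q_{1212}] = O\!\left(\frac{w^\ast}{D}+\frac{\lambda L}{D^2}\log\frac{D}{w^\ast}\right)=O\!\left(\frac{1}{W}\big(1+\log W\big)\right)=O(M^{-1}),
\]
the logarithm arising precisely from $\int_{w^\ast}^{D}\tfrac{dw}{w}$.

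The main obstacle is making the oscillatory-integral step fully rigorous. One must (i) lower-bound $|\partial\Delta/\partial y_2|$ uniformly by $\Theta(|v_1-v_2|/L)$ using only the far-field estimates on the $d_{ik}$, and control its variation (or $\partial^2\Delta/\partial y_2^2$) so that the van der Corput / integration-by-parts bound genuinely applies; (ii) absorb the slowly varying amplitude $a_{11}a_{12}a_{21}a_{22}$ into that estimate, which is harmless since its sup-norm and total variation over $y_2$ are $O(1)$; and (iii) verify that the neglected higher-order terms in the expansion of $\Delta$ do not destroy the derivative lower bound in the intermediate range of $|v_1-v_2|$, switching if necessary to integrating over $v_2$, where $\partial\Delta/\partial v_2\approx (y_1-y_2)/L$ plays the analogous role. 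It is worth noting that a sharper characteristic-function computation for the bilinear phase $-\tfrac{(v_1-v_2)(y_1-y_2)}{L}$ gives $\E[\cos]=\Theta(W^{-1})$ with no logarithm; the extra $\log W$ in $M$ is therefore slack introduced by the one-variable-then-average technique, which is entirely acceptable for the upper bound we need.
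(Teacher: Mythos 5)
Your strategy---a direct oscillatory-integral bound on the four-node phase $\Delta$, followed by averaging over one transverse separation---is genuinely different from the paper's, and several pieces of it are sound: the far-field identity $\Delta\approx-\tfrac{(v_1-v_2)(y_1-y_2)}{L}$, the one-variable bound $\min\{1,\tfrac{\lambda L}{D|v_1-v_2|}\}$, and the final averaging that produces $W^{-1}(1+\log W)$ with $W=\tfrac{D^2}{\lambda L}$. But there is a genuine gap, and it is exactly the point you flag as obstacle (iii); the fix you propose does not close it. Your derivative estimate is $\partial\Delta/\partial y_2=\tfrac{v_1-v_2}{L}+O(D^2/L^2)$, where the error term (essentially $\tfrac{(v_2-y_2)(u_2-u_1)}{L^2}$) does \emph{not} vanish when $v_1=v_2$. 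So the lower bound $|\partial\Delta/\partial y_2|=\Omega(|v_1-v_2|/L)$ fails on the set $|v_1-v_2|\lesssim D^2/L$, while your averaging step needs it all the way down to the threshold $w^*=\lambda L/D$; note $D^2/L\gg \lambda L/D$ precisely when $W\gg L/D$, which is the central DoF-limited regime of the paper (e.g., $L=2D$ and $\lambda/D\to 0$, as occurs at every level of the HC scheme). Switching to integration over $v_2$, with derivative $\tfrac{y_1-y_2}{L}+O(D^2/L^2)$, only helps when $|y_1-y_2|\gtrsim D^2/L$; on the doubly-degenerate set where \emph{both} $|v_1-v_2|$ and $|y_1-y_2|$ are $O(D^2/L)$, neither bound holds. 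That set has probability $\Theta((D/L)^2)$, and $(D/L)^2$ is not $O(M^{-1})$ whenever $W(1+\log W)^{-1}\gg (L/D)^2$, so it cannot be discarded by the trivial bound. The oscillation that remains there comes only from second-order terms of size $\sim\tfrac{(u_1-u_2)(v_2-y_2)(y_1-y_2)}{L^2}$, and a second-derivative (van der Corput) estimate on that set yields contributions of order $\sqrt{\lambda/D}$, which is still far larger than $M^{-1}$ when $W$ is large; one is forced into a cascade of ever-deeper stationary-phase analyses with no evident way to stop at a single logarithm.

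The paper avoids this entirely with one structural step your proposal is missing: conditioned on the two transmit nodes $z_u,z_v$, the two receive nodes $s,t$ are i.i.d., so $\E[Q_{1212}|z_u,z_v]=\E^2[a_{su}a_{sv}\cos(\tfrac{2\pi}{\lambda}(d_{su}-d_{sv}))|z_u,z_v]+\E^2[a_{su}a_{sv}\sin(\tfrac{2\pi}{\lambda}(d_{su}-d_{sv}))|z_u,z_v]$. This factorization reduces the problem to a single oscillatory integral over \emph{one} receive node, with only one nearby pair $(u,v)$ entering the phase. The only degeneracy is then the one-dimensional event that the $u$--$v$ axis points at $C_R$ with small vertical offset $y_{uv}$; the single-pair estimate $\sqrt{\lambda D/(y_{uv}L)}$ is squared by the factorization into $\lambda L/(y_{uv}D)$, and averaging $\min\{1,\lambda L/(y_{uv}D)\}$ over $y_{uv}$ gives exactly $\tfrac{\lambda L}{D^2}(1+\log^+\tfrac{D^2}{\lambda L})=O(M^{-1})$---the same computation you perform, but applied where it is actually valid, because the degenerate set now has probability $O(W^{-1})$ rather than $\Theta((D/L)^2)$. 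If you want to salvage your outline, this conditional-independence factorization is the missing idea; with it, the rest of your machinery (monotone-piece cancellation lemma, case split on whether the line through the two transmit nodes intersects $C_R$) goes through essentially as in the paper's Appendix C.
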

From Lemmas~\ref{lemma:converge} and~\ref{lemma:mu}, we have $\E[\gamma^2]=O(\max\{N^2, N^3M^{-1}\})$ with high probability as $N$ tends to infinity.

Now, by using the bounds on $\E[\gamma]$ and $\E[\gamma^2]$, Theorem \ref{theorem:MIMO} is proved.
\endproof

\section{ Proof of Lemma~\ref{lemma:converge}}
\label{appendix:lemma_converge}
Let us first present a theorem on the strong convergence of the sample mean of a sequence of not necessarily independent random variables. The proof is in~\cite{Lyons:88}.
\begin{theorem} \label{thm:conv}
Let $\{X_m\}_{m=1}^{\infty}$ be a sequence of not necessarily independent complex-valued random variables, each of which follows an identical probability density function $f(x)$ such that $\E[X]=0$ and $\E[|X|^2]$ and $|X|$ are bounded. Suppose that
\begin{align}
\sum_{K\geq 1}\frac{1}{K^3}\E\left[\bigg|\sum_{m\leq K}X_m\bigg|^2\right]<\infty. \label{eqn:SSSN_cri}
\end{align}
Then, the strong law of large numbers holds for $\{X_m\}_{m=1}^{\infty}$, i.e.,
\begin{align*}
\lim_{K\rightarrow \infty}\frac{1}{K}\sum_{m\leq K}X_m =0 \mbox{~ almost surely.}
\end{align*}
\end{theorem}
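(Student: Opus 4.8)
The plan is to prove the almost-sure convergence $\frac{1}{K}\sum_{m\le K}X_m\to 0$, where I abbreviate $S_K\triangleq\sum_{m\le K}X_m$, by first collapsing the summability hypothesis (\ref{eqn:SSSN_cri}) into a single pathwise condition and then exploiting the boundedness of the increments $|X_m|$ to exclude persistent large deviations. The first step is a Tonelli argument: since $\E\big[\sum_{K\ge 1}|S_K|^2/K^3\big]=\sum_{K\ge 1}\E[|S_K|^2]/K^3<\infty$ by (\ref{eqn:SSSN_cri}), the nonnegative random series $W\triangleq\sum_{K\ge 1}|S_K|^2/K^3$ is finite almost surely. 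The whole theorem then reduces to the deterministic statement that any realization with $W<\infty$ and with increments bounded by $|X_m|\le C$ must satisfy $|S_K|/K\to 0$.

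The key step is a propagation estimate that turns one large value of $|S_n|/n$ into many large summands of $W$. Fix a small $\epsilon\in(0,1)$ and suppose $|S_n|\ge\epsilon n$ for some large $n$. Because $|S_n-S_m|\le C|n-m|$, every integer $m$ in the window $n(1-\tfrac{\epsilon}{2C})\le m\le n$ satisfies, by the triangle inequality, $|S_m|\ge\epsilon n-C(n-m)\ge\tfrac{\epsilon}{2}n$, so each such $m$ contributes at least $|S_m|^2/m^3\ge(\epsilon n/2)^2/n^3=\epsilon^2/(4n)$ to $W$. Since the window contains of order $\tfrac{\epsilon}{2C}n$ integers, the total contribution of a single window is bounded below by a positive constant $c(\epsilon)$ of order $\epsilon^3/C$, independent of $n$.

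It then remains to contradict the event that $|S_n|/n\ge\epsilon$ occurs for infinitely many $n$. On that event I would greedily extract a geometrically separated subsequence $n_1<n_2<\cdots$ with $n_{i+1}\ge 2n_i$; for $\epsilon$ small the corresponding windows lie in $(n_i(1-\tfrac{\epsilon}{2C}),n_i]$ and are therefore pairwise disjoint, so summing the per-window bound gives $W\ge\sum_i c(\epsilon)=\infty$, contradicting $W<\infty$. Hence $\P\big(|S_n|/n\ge\epsilon\text{ infinitely often}\big)=0$ for every small $\epsilon$, and intersecting over $\epsilon=1/j$ with $j\in\mathbb{N}$ yields $\limsup_K|S_K|/K=0$ almost surely, which is the assertion. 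The argument works verbatim for complex-valued $X_m$ since it uses only the modulus and $|S_n-S_m|\le C|n-m|$.

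The main obstacle is precisely that this elementary proof leans on the bounded-increment hypothesis $|X|\le C$: that is what lets a single large $|S_n|/n$ force a whole window of nearby summands of $W$ to be large. Without it—the general weakly correlated setting of~\cite{Lyons:88}—one can no longer propagate, and must instead bound $\max_{K\in[2^j,2^{j+1})}|S_K|$ dyadically via a Rademacher--Menshov-type maximal inequality, which is considerably more delicate. Since Theorem~\ref{thm:conv} as stated here does assume $|X|$ bounded, I would give the self-contained window argument above and reserve the citation to~\cite{Lyons:88} for the sharper unbounded version.
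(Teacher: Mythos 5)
Your argument is correct, but it takes a genuinely different route from the paper: the paper gives no proof of this theorem at all, deferring entirely to Lyons' strong law for weakly correlated random variables \cite{Lyons:88}, which yields the conclusion from the summability hypothesis (\ref{eqn:SSSN_cri}) alone, with no boundedness of $|X|$ required. Your proof is instead self-contained and elementary: after the Tonelli reduction to the pathwise statement that $W=\sum_{K\geq 1}|S_K|^2/K^3<\infty$ almost surely, you use the bounded-increment property $|S_n-S_m|\leq C|n-m|$ to propagate a single deviation $|S_n|\geq \epsilon n$ into a window of roughly $\epsilon n/(2C)$ indices $m$ with $|S_m|\geq \epsilon n/2$, each contributing at least $\epsilon^2/(4n)$ to $W$, so that infinitely many geometrically separated deviations would force $W=\infty$. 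The quantitative details check out: for $m\geq n(1-\epsilon/(2C))$ one indeed gets $|S_m|\geq \epsilon n/2$; the per-window contribution is of order $\epsilon^3/C$ once $n$ is large enough that the window contains order $\epsilon n/C$ integers; the windows attached to $n_{i+1}\geq 2n_i$ are pairwise disjoint once $\epsilon/(2C)<1/2$; and monotonicity of the events in $\epsilon$ lets you pass to all $\epsilon$ via $\epsilon=1/j$. The trade-off is clear: your proof leans on the hypothesis that $|X|$ is (essentially) bounded, which the theorem as stated does assume and which holds in the paper's application (there the variables are the centered $Q_{ijkl}$, bounded by $2a_{\max}^4$), so it fully suffices and makes the appendix self-contained; the paper's citation route is stronger in generality, since Lyons' theorem needs only uniformly bounded second moments, but hides a considerably more delicate dyadic maximal-inequality argument inside the reference. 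One presentational nit: state explicitly that ``$|X|$ bounded'' means $|X|\leq C$ almost surely, so that on a single probability-one event every increment obeys the bound and your deterministic window argument applies pathwise.
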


Now, let us prove Lemma \ref{lemma:converge} using Theorem \ref{thm:conv}. For $w\in \mathbb{N}$, let $U_w$ and $V_w$ denote the collections of random variables given as
\begin{align*}
U_w&=\{Q_{ijkl}-\E[Q_{1212}]|1\leq i<j\leq w, ~1\leq k<l\leq w \}
\end{align*}
and
\begin{align*}
V_w&=
\begin{cases}
\emptyset & \mbox{if } w=1 \\
U_w\setminus U_{w-1} & \mbox{otherwise}
\end{cases}.
\end{align*}
Note that $|U_w|=\frac{w^2(w-1)^2}{4}$, $|V_w|=(w-1)^3$, and $\bigcup_{i=1}^wV_i=U_w$. Let $V_w^i$ for $i\in[1:(w-1)^3]$ be the $i$-th random variable in $V_w$ with an arbitrary ordering.
We construct a sequence $\{X_m\}_{m=1}^{\infty}$ of random variables as follows: for $m\in \mathbb{N}$, we let $X_m$ denote the random variable $V_{m'+1}^{m-\sum_{i=1}^{m'}|V_i|}$, where $m'$ is the integer satisfying $\sum_{i=1}^{m'}|V_i|+1\leq m \leq \sum^{m'+1}_{i=1}|V_i|$.

Let us show that $\{X_m\}_{m=1}^{\infty}$ satisfies the conditions in Theorem \ref{thm:conv}. First, it is easy to show that $\E[|X_m|^2]$ and $|X_m|$ are bounded, i.e., $\E[|X_m|^2]\leq a_{\max}^8$ and $|X_m|\leq 2a_{\max}^4$.
Next, the left-hand side term of the inequality in $(\ref{eqn:SSSN_cri})$ is written as
\begin{align*}
&\sum_{K\geq 1}\frac{1}{K^3}\E\left[\bigg|\sum_{m\leq K}X_m\bigg|^2\right]\cr&=\sum_{K\geq 1}\frac{1}{K^3}\sum_{m_1\leq K}\sum_{m_2\leq K}\E\left[X_{m_1}X_{m_2}\right].
\end{align*}
Consider two random variables $X_{m_1}=Q_{i_1j_1k_1l_1}-\E[Q_{1212}]$ and $X_{m_2}=Q_{i_2j_2k_2l_2}-\E[Q_{1212}]$ in $\{X_m\}_{m=1}^{\infty}$. If $\{i_1,j_1\}\bigcap\{i_2,j_2\}={\emptyset}$ and $\{k_1,l_1\}\bigcap\{k_2,l_2\}={\emptyset}$, $X_{m_1}$ and $X_{m_2}$ are independent of each other, and hence, $\E[X_{m_1}X_{m_2}]=0$. Otherwise, $|\E[X_{m_1}X_{m_2}]|=|\E[Q_{i_1j_1k_1l_1}Q_{i_2j_2k_2l_2}]-\E[Q_{1212}]^2|\leq 2a_{\max}^8$. Using this, let us get an upper bound on \[\frac{1}{K^3}\sum_{m_1\leq K}\sum_{m_2\leq K}\E\left[X_{m_1}X_{m_2}\right]\] for each $K\in \mathbb{N}$ as follows.
\begin{itemize}
\item $K=\frac{w^2(w-1)^2}{4}$ for some $w\in \mathbb{N}$:
In this case, $\{X_m|1\leq m \leq K\}$ is $U_w$. For each random variable $X_{m_1}$ in $U_w$, $\frac{(w-2)^2(w-3)^2}{4}$ random variables in $U_w$ are independent of $X_{m_1}$. Thus, we have
\begin{align*}
&\frac{1}{K^3}\sum_{m_1\leq K}\sum_{m_2\leq K}\E\left[X_{m_1}X_{m_2}\right]\cr
&\leq \frac{1}{K^3} K(2a_{\max}^8)\frac{w^2(w-1)^2-(w-2)^2(w-3)^2}{4}\cr
&=8a_{\max}^8\frac{w^2(w-1)^2-(w-2)^2(w-3)^2}{w^4(w-1)^4}\cr
&\leq C_1''a_{\max}^8\frac{1}{(w^2(w-1)^2/4)^{5/4}}\cr
&=C_1''a_{\max}^8\frac{1}{K^{5/4}}
\end{align*}
for some positive constant $C_1''$.
\item $\frac{w^2(w-1)^2}{4}<K<\frac{w^2(w+1)^2}{4}$ for some $w\in \mathbb{N}$:
Let $\hat{K}_1=\frac{w^2(w-1)^2}{4}$ and $\hat{K}_2=\frac{w^2(w+1)^2}{4}$. Then, we get
\begin{align*}
&\frac{1}{K^3}\sum_{m_1\leq K}\sum_{m_2\leq K}\E\left[X_{m_1}X_{m_2}\right]\cr
&\leq \frac{1}{K^3}\sum_{m_1\leq K}\sum_{m_2\leq K}|\E\left[X_{m_1}X_{m_2}\right]|\cr
&\leq \frac{1}{\hat{K}_1^3}\sum_{m_1\leq \hat{K}_2}\sum_{m_2\leq \hat{K}_2}|\E\left[X_{m_1}X_{m_2}\right]| \cr
&\leq \frac{1}{\hat{K}_1^3} \hat{K}_2(2a_{\max}^8)\frac{w^2(w+1)^2-(w-1)^2(w-2)^2}{4}\cr
&= 8a_{\max}^8\frac{w^2(w+1)^4-(w-2)^2(w-1)^2(w+1)^2}{w^4(w-1)^6}\cr
&\leq \frac{C_2''a_{\max}^8}{(w^2(w+1)^2/4)^{5/4}}\cr
&=C_2''a_{\max}^8\frac{1}{\hat{K}_2^{5/4}}\cr
&<C_2''a_{\max}^8\frac{1}{K^{5/4}}
\end{align*}
for some positive constant $C_2''$.
\end{itemize}
Let $C''\triangleq a_{\max}^8\max\{C_1'',C_2''\}$. Now we have
\begin{align*}
&\sum_{K\geq 1}\frac{1}{K^3}\E\left[\bigg|\sum_{m\leq K}X_m\bigg|^2\right]\cr
&=\sum_{K\geq 1}\frac{1}{K^3}\sum_{m_1\leq K}\sum_{m_2\leq K}\E\left[X_{m_1}X_{m_2}\right]\cr
&\leq C''\sum_{K\geq 1}\frac{1}{K^{5/4}},
\end{align*}
which is finite. Hence, from Theorem~\ref{thm:conv}
\begin{align*}
\P\left(\lim_{K\rightarrow \infty}\frac{1}{K}\sum_{m\leq K}X_m =0\right)=1,
\end{align*}
which concludes the proof of Lemma \ref{lemma:converge}. \endproof

\section{Proof of Lemma \ref{lemma:mu}}
\label{appendix:mu}
Consider two uniformly and independently distributed nodes $u$ and $v$ in $C_T$ and two uniformly and independently distributed nodes $s$ and $t$ in $C_R$. Consider a cartesian coordinate system whose origin is at the bottom left corner of $C_T$. Let $z_u=(x_u, y_u)$, $z_v=(x_v, y_v)$, $z_s=(x_s, y_s)$, and $z_t=(x_t, y_t)$ denote the coordinates of nodes $u$, $v$, $s$, and $t$, respectively. Let $\mathcal{S}(X)$ and $\mathcal{S}(X|Y)$ for random variables $X$ and $Y$ denote the support of the probability density function $f(x)$ and the support of the conditional probability density function $f(x|y)$, respectively. Let $\Gamma_1\subset \mathcal{S}(z_u,z_v)$ denote the set of $(z_u,z_v)$ such that the line through $z_u$ and $z_v$ intersects $C_R$, and let $\Gamma_2$ denote $\mathcal{S}(z_u,z_v)\setminus \Gamma_1$. Let $\theta\triangleq \angle uvs$ and let $\Delta(z_u,z_v,d_{sv})$ denote the length of $\mathcal{S}(\theta|z_u,z_v,d_{sv})$ where the length of an interval $[a,b]$ is defined as $b-a$.\footnote{Here, we follow the convention that $\angle{BAC}$ is the counterclockwise angle from $B$ to $C$ and $|\angle{BAC}|\leq \pi$.} Let $\phi\triangleq \angle vsu$ and let $\phi_1$, $\phi_2$, $\phi_3$, and $\phi_4$ denote $\angle vsu$ when $z_s$ is fixed at $(L,0), (L+D,0), (L+D,D),$ and $(L,D)$, respectively. Let $|\phi_m|\triangleq\min\{|\phi_1|,|\phi_2|,|\phi_3|,|\phi_4|\}$. See Fig. \ref{fig:notation}.

\begin{figure}[t]
 \centering
{
\psfrag{Ct}[c]{$C_T$}
\psfrag{r}[c]{$C_R$}
\psfrag{s}[c]{$s$} \psfrag{t}[c]{$t$} \psfrag{u}[c]{$u$} \psfrag{v}[c]{$v$}
\psfrag{theta}[c]{$\theta$}
\psfrag{phi}[c]{$\phi$}
\includegraphics[width=85mm]{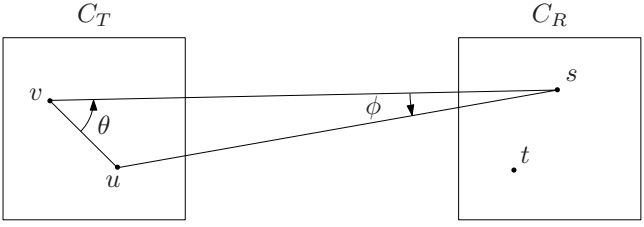}}
\caption{Two nodes $u$ and $v$ in $C_T$ and two nodes $s$ and $t$ in $C_R$. $\theta$ and $\phi$ denote $\angle uvs$ and $\angle vsu$, respectively. } \label{fig:notation}
\end{figure}

Now we are ready to prove Lemma~\ref{lemma:mu}. $\E[Q_{1212}]$ is upper-bounded as
\begin{align*}
&\E[Q_{1212}]\cr
&=\E[Q_{stuv}]\\
&=\E[\E[Q_{stuv}|z_u,z_v]]\\
&=\E[\E^2[a_{su}a_{sv}\cos(\frac{2\pi}{\lambda}(d_{su}-d_{sv}))|z_u,z_v]\cr
&~+\E^2[a_{su}a_{sv}\sin(\frac{2\pi}{\lambda}(d_{su}-d_{sv}))|z_u,z_v]]\\
&\leq \E[\E^2[a_{\max}|\E[a_{su}\cos(\frac{2\pi}{\lambda}(d_{su}-d_{sv}))|z_u,z_v,d_{sv}]||z_u,z_v]\\
&~+\E^2[a_{\max}|\E[a_{su}\sin(\frac{2\pi}{\lambda}(d_{su}-d_{sv}))|z_u,z_v,d_{sv}]||z_u,z_v]].
\end{align*}
Furthermore, $|\E[a_{su}\cos(\frac{2\pi}{\lambda}(d_{su}-d_{sv}))|z_u,z_v,d_{sv}]|$ and $|\E[a_{su}\sin(\frac{2\pi}{\lambda}(d_{su}-d_{sv}))|z_u,z_v,d_{sv}]|$ are upper-bounded as
\begin{align}
&|\E[a_{su}\cos(\frac{2\pi}{\lambda}(d_{su}-d_{sv}))|z_u,z_v,d_{sv}]|,\cr
&|\E[a_{su}\sin(\frac{2\pi}{\lambda}(d_{su}-d_{sv}))|z_u,z_v,d_{sv}]|\cr
&\leq \begin{cases}
\frac{K_{11}'a_{\max}}{\Delta(z_u,z_v,d_{sv})}\sqrt{\frac{\lambda D}{y_{uv}L}}  &\mbox{if }(z_u,z_v)\in \Gamma_1\\
\frac{K_{21}'a_{\max}\lambda}{d_{sv}\Delta(z_u,z_v,d_{sv})\sin|\phi_m|} &\mbox{if }(z_u,z_v)\in \Gamma_2
\end{cases}\label{eqn:gamma_1}
\end{align}
for some positive constants $K_{11}'$ and $K_{21}'$, where $y_{uv}\triangleq |y_u-y_v|$. These upper bounds are derived at the end of this appendix.

Using the above upper bounds, $\E[Q_{stuv}|z_u,z_v]$ is upper-bounded separately for the cases of $(z_u,z_v)
\in \Gamma_1$ and $(z_u,z_v)\in \Gamma_2$. If $(z_u,z_v)\in \Gamma_1$, we have
\begin{align*}
&\E[Q_{stuv}|z_u,z_v]\cr
&\leq \E^2[a_{\max}|\E[a_{su}\cos(\frac{2\pi}{\lambda}(d_{su}-d_{sv}))|z_u,z_v,d_{sv}]||z_u,z_v]\\
&~+\E^2[a_{\max}|\E[a_{su}\sin(\frac{2\pi}{\lambda}(d_{su}-d_{sv}))|z_u,z_v,d_{sv}]||z_u,z_v]\\
&\leq 2\E^2\left[\frac{K_{11}'a_{\max}^2}{\Delta(z_u,z_v,d_{sv})}\sqrt{\frac{\lambda D}{y_{uv}L}}
\bigg|z_u,z_v\right]\\
&\leq K_{12}'a_{\max}^4 \frac{\lambda L}{y_{uv}D}
\end{align*}
for some positive constant $K_{12}'$. If $(z_u,z_v)\in \Gamma_2$, we have
\begin{align*}
&\E[Q_{stuv}|z_u,z_v]\cr
&\leq \E^2[a_{\max}|\E[a_{su}\cos(\frac{2\pi}{\lambda}(d_{su}-d_{sv}))|z_u,z_v,d_{sv}]||z_u,z_v]\\
&~+\E^2[a_{\max}|\E[a_{su}\sin(\frac{2\pi}{\lambda}(d_{su}-d_{sv}))|z_u,z_v,d_{sv}]||z_u,z_v]\\
&\leq a_{\max}^3\E[|\E[a_{su}\cos(\frac{2\pi}{\lambda}(d_{su}-d_{sv}))|z_u,z_v,d_{sv}]||z_u,z_v]\\
&~+a_{\max}^3\E[|\E[a_{su}\sin(\frac{2\pi}{\lambda}(d_{su}-d_{sv}))|z_u,z_v,d_{sv}]||z_u,z_v]\\
&\leq 2a_{\max}^3\E\left[\frac{K_{21}'a_{\max}\lambda}{d_{sv}\Delta(z_u,z_v,d_{sv})\sin|\phi_m|}\bigg|z_u,z_v\right]\\
&\leq K_{22}'a_{\max}^4 \frac{\lambda}{D\sin|\phi_m|}
\end{align*}
for some positive constant $K_{22}'$.

Because $\E[Q_{stuv}|z_u,z_v]\leq a_{\max}^4$, $\E[Q_{stuv}|z_u,z_v]$ is upper-bounded as
\begin{align*}
&\E[Q_{stuv}|z_u,z_v]\cr
&\leq \begin{cases}
a_{\max}^4\min\left\{1, K_{12}' \frac{\lambda L}{y_{uv}D}\right\} & \mbox{if }(z_u,z_v)\in \Gamma_1,\\
a_{\max}^4\min\left\{1, K_{22}'\frac{\lambda}{D\sin|\phi_m|}\right\} &\mbox{if }(z_u,z_v)\in \Gamma_2.
\end{cases}
\end{align*}
Finally, $\E[Q_{stuv}]$ is upper-bounded as follows:
\begin{align*}
&\E[Q_{stuv}]\cr
&=\int_{\Gamma_1}\E[Q_{stuv}|z_u,z_v]f(z_u,z_v)dz_udz_v\cr
&~+\int_{\Gamma_2}\E[Q_{stuv}|z_u,z_v]f(z_u,z_v)dz_udz_v\\
&\leq a_{\max}^4 \int_{\Gamma_1}\min\left\{1, K_{12}' \frac{\lambda L}{y_{uv}D}\right\}f(z_u,z_v)dz_udz_v \cr
&~+a_{\max}^4\int_{\Gamma_2}\min\left\{1, K_{22}'\frac{\lambda}{D\sin|\phi_m|}\right\}f(z_u,z_v)dz_udz_v\\
&\leq a_{\max}^4\int_{\mathcal{S}(z_u,z_v)}\min\left\{1, K_{12}' \frac{\lambda L}{y_{uv}D}\right\}f(z_u,z_v)dz_udz_v\cr
&~+a_{\max}^4\int_{\mathcal{S}(z_u,z_v)}\min\left\{1, K_{22}'\frac{\lambda}{D\sin|\phi_m|}\right\} f(z_u,z_v)dz_udz_v\\
&\leq a_{\max}^4 \int_{\mathcal{S}(y_{uv})}\min\left\{1, K_{12}' \frac{\lambda L}{y_{uv}D}\right\}f(y_{uv})dy_{uv}\\
&~+K_{23}'a_{\max}^4\int_{\mathcal{S}(|\phi_1|)}\min\left\{1, K_{22}'\frac{\lambda}{D\sin|\phi_1|}\right\}f(|\phi_1|) d|\phi_1|\\
&\leq a_{\max}^4 \frac{\lambda L}{K_{31}' D^2}\left(1+\left(\log \frac{K_{32}'D^2}{\lambda L}\right)^+\right)
\end{align*}
for some positive constants $K_{23}'$, $K_{31}'$, and $K_{32}'$.
Since $\E[Q_{stuv}]\leq a_{\max}^4$, Lemma~\ref{lemma:mu} is proved.

Now it remains to show the upper bounds in (\ref{eqn:gamma_1}). The upper bounds in (\ref{eqn:gamma_1}) are obtained by using the following lemma, whose proof is at the end of this appendix.
\begin{lemma} \label{lemma:int_upper}Let $g(x)$ be a periodic Lebesgue-integrable function on $\mathbb{R}$ with period $p>0$ that satisfies $g(x)=-g(x+p/2)$ and $\max_{x\in \mathbb{R}} |g(x)|=1$. Let $h(x)$ be a non-negative and Lebesgue-integrable function on $\mathbb{R}$. Consider an interval $[a,b]$ and constants $c_1\neq 0$ and $c_2$. If a partition $\Pi=\{x_0,x_1,...,x_m\}$ of $[a,b]$ exists for finite $m$ such that $a=x_0<x_1<...<x_m=b$ and $h(x)$ is monotone on each interval $[x_{i-1}, x_i]$ for $i\in[1:m]$, we have
\begin{align*}
\bigg|\int^{b}_{a}g(c_1x+c_2)h(x)dx\bigg| \leq m \int^{\tilde{x}+\frac{p}{2|c_1|}}_{\tilde{x}}h(x)dx 
\end{align*}
where $\tilde{x}\in \mathbb{R}$ is such that $\int^{\tilde{x}+\frac{p}{2|c_1|}}_{\tilde{x}}h(x)dx\geq \int^{x+\frac{p}{2|c_1|}}_{x}h(x)dx$ for all $x\in [a, b-\frac{p}{2|c_1|}]$.
\end{lemma}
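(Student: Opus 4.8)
The plan is to exploit the sign-reversal property $g(\cdot+p/2)=-g(\cdot)$ to fold the oscillatory integral onto a single half-period window, where the monotonicity of $h$ forces a cancellation of alternating-series type. First I would reduce to a single monotone interval. Since $[a,b]$ splits into the $m$ pieces $[x_{i-1},x_i]$ on each of which $h$ is monotone, it suffices to prove, for each such piece, a bound of the form $\bigl|\int \psi h\bigr|\le \int_{\tilde x}^{\tilde x+T}h$, where $\psi(x):=g(c_1x+c_2)$ and $T:=\frac{p}{2|c_1|}$; summing over the $m$ pieces then produces the factor $m$. The basic structural fact I would record first is that, for any sign of $c_1$, $\psi(x+T)=g(c_1x+c_2\pm p/2)=-\psi(x)$ (the $-p/2$ case being handled by periodicity), together with $|\psi|\le 1$.

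Next I would fix one monotone piece $[\alpha,\beta]$ and assume $h$ is non-increasing there; the non-decreasing case follows from the reflection $x\mapsto\alpha+\beta-x$, which preserves both $|\psi|\le 1$ and the relation $\psi(x+T)=-\psi(x)$ while reversing the monotonicity of $h$. Partition $[\alpha,\beta]$ into the consecutive blocks $[\alpha+jT,\alpha+(j+1)T]$ for $j=0,1,\dots$, with a possibly shorter final block. Substituting $x=y+jT$ on the $j$-th block and using $\psi(y+jT)=(-1)^j\psi(y)$ folds the whole integral onto the first block:
\begin{align*}
\int_\alpha^\beta \psi(x)h(x)\,dx=\int_\alpha^{\alpha+T}\psi(y)\,S(y)\,dy,\qquad S(y):=\sum_{j:\,y+jT\le\beta}(-1)^j h(y+jT),
\end{align*}
where for each fixed $y$ the sum $S(y)$ is a finite alternating sum terminating at the last block containing $y+jT$.

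The crux is then elementary. Because $h$ is non-increasing and non-negative, for each $y$ the sequence $h(y),h(y+T),h(y+2T),\dots$ is non-increasing and non-negative, so every partial sum of the alternating series $S(y)$ stays trapped in $[0,h(y)]$; in particular $0\le S(y)\le h(y)$. Combining this with $|\psi(y)|\le 1$ gives
\begin{align*}
\Bigl|\int_\alpha^\beta \psi(x)h(x)\,dx\Bigr|\le \int_\alpha^{\alpha+T}|\psi(y)|\,S(y)\,dy\le \int_\alpha^{\alpha+T}h(y)\,dy,
\end{align*}
and since $h\ge 0$ this first-block integral is dominated by the integral of $h$ over the width-$T$ window that maximizes it, namely $\int_{\tilde x}^{\tilde x+T}h$. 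Summing the resulting estimate over the $m$ monotone pieces finishes the proof.

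The main obstacle I anticipate is bookkeeping rather than conceptual. I must justify the alternating-sum bound for the \emph{truncated} sum coming from the shorter final block (partial sums of an alternating series with non-increasing non-negative terms remain in $[0,a_0]$, so this is safe), and I must ensure that the width-$T$ window bounding each piece is admissible in the definition of $\tilde x$, which presumes $b-a\ge T$. When a monotone piece is shorter than $T$, its first-block integral is taken over an interval of length $<T$ that can be enlarged to an admissible width-$T$ window without decreasing the integral, since $h\ge 0$, so $\int_{\tilde x}^{\tilde x+T}h$ still dominates. The genuinely new ingredient is the folding identity driven by $\psi(x+T)=-\psi(x)$; everything else is routine once that is in place.
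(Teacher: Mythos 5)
Your proof is correct and follows essentially the same route as the paper: the paper's proof also reduces to the two per-monotone-piece bounds (oscillatory integral over an increasing piece bounded by the half-period window at its right end, and over a decreasing piece by the window at its left end) and then sums over the $m$ pieces. The only difference is that the paper asserts those per-piece inequalities as "easily shown," whereas you actually prove them via the folding identity $\psi(x+T)=-\psi(x)$ and the alternating-sum bound $0\leq S(y)\leq h(y)$, and you also flag the window-admissibility edge cases that the paper glosses over.
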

\begin{figure*}\hrule
\begin{align}
|\E[a_{su}\cos(\frac{2\pi}{\lambda}(d_{su}-d_{sv}))|z_u,z_v,d_{sv}]|=\bigg|\int_{\mathcal{S}(d_{su}|z_u,z_v,d_{sv})} \cos(\frac{2\pi}{\lambda}(d_{su}-d_{sv}))a_{su}f(d_{su}|z_u,z_v,d_{sv}) dd_{su}\bigg| \label{eqn:ub}
\end{align}\hrule
\end{figure*}
To obtain an upper bound on (\ref{eqn:ub}) using Lemma~\ref{lemma:int_upper}, we first show that $\mathcal{S}(d_{su}|z_u,z_v,d_{sv})$ consists of a finite number of intervals such that $a_{su}f(d_{su}|z_u,z_v,d_{sv})$ is monotone for each. Because $|\theta|$ and $d_{su}$ have a one-to-one relationship, we have
\[a_{su}f(d_{su}|z_u,z_v,d_{sv})=a_{su}f(|\theta||z_u,z_v,d_{sv})\frac{d|\theta|}{dd_{su}}\] where \[\frac{d|\theta|}{dd_{su}}=\frac{d_{su}}{d_{sv}d_{uv}\sin |\theta|}=\frac{1}{d_{sv}\sin|\phi|}.\]
We can easily show that $a_{su}\frac{d|\theta|}{dd_{su}}$ has at most two critical points from its derivative with respect to $|\theta|$ and that $\mathcal{S}(|\theta||z_u, z_v, d_{sv})$ can be split into at most four intervals such that $f(|\theta||z_u,z_v,d_{sv})$ is a constant for each. Hence, $\mathcal{S}(|\theta||z_u, z_v, d_{sv})$ can be split into at most six intervals such that $a_{su}f(|\theta||z_u,z_v,d_{sv})\frac{d|\theta|}{dd_{su}}$ is monotone for each, implying that $\mathcal{S}(d_{su}|z_u,z_v,d_{sv})$ can also be split into at most six intervals such that $a_{su}f(d_{su}|z_u,z_v,d_{sv})$ is monotone for each. Because $a_{su}\leq a_{\max}$ and $f(|\theta||z_u,z_v,d_{sv})\leq \frac{2}{\Delta(z_u,z_v,d_{sv})}$, we have
\begin{align}
&|\E[a_{su}\cos(\frac{2\pi}{\lambda}(d_{su}-d_{sv}))|z_u,z_v,d_{sv}]|\cr
&\leq \frac{12a_{\max}}{\Delta(z_u,z_v,d_{sv})}\int_{\tilde{d}_{su}}^{\tilde{d}_{su}+\frac{\lambda}{2}} \frac{d|\theta|}{dd_{su}}dd_{su}\label{eqn:mid1}
\end{align}
from Lemma \ref{lemma:int_upper}, where $\tilde{d}_{su}$ is such that \[\int_{\tilde{d}_{su}}^{\tilde{d}_{su}+\frac{\lambda}{2}} \frac{d|\theta|}{dd_{su}}dd_{su}\geq \int_{d_{su}}^{d_{su}+\frac{\lambda}{2}} \frac{d|\theta|}{dd_{su}}dd_{su}\] for all $d_{su}\in\mathcal{S}(d_{su}|z_u,z_v,d_{sv})$.

In the same way,
\begin{align}
&|\E[a_{su}\sin(\frac{2\pi}{\lambda}(d_{su}-d_{sv}))|z_u,z_v,d_{sv}]|\cr
&\leq  \frac{12a_{\max}}{\Delta(z_u,z_v,d_{sv})}\int_{\tilde{d}_{su}}^{\tilde{d}_{su}+\frac{\lambda}{2}} \frac{d|\theta|}{dd_{su}}dd_{su}.\label{eqn:mid2}
\end{align}
We bound $\int_{\tilde{d}_{su}}^{\tilde{d}_{su}+\frac{\lambda}{2}} \frac{d|\theta|}{dd_{su}}dd_{su}$ separately for the cases of $(z_u,z_v)\in \Gamma_1$ and $(z_u,z_v)\in \Gamma_2$.  Without loss of generality, assume that $x_v\leq x_u$. First, consider the case of $(z_u, z_v)\in \Gamma_1$. Note that $ \frac{d|\theta|}{dd_{su}}$ is decreasing in $d_{su}\in[d_{sv}-d_{uv}, \sqrt{d_{sv}^2-d_{uv}^2}]$ and is increasing in $d_{su}\in[\sqrt{d_{sv}^2-d_{uv}^2}, d_{sv}+d_{uv}]$. For the case of $(z_u,z_v)\in \Gamma_1$, $d_{sv}-d_{uv}\in \mathcal{S}(d_{su}|z_u,z_v,d_{sv})\subseteq [d_{sv}-d_{uv}, \sqrt{d_{sv}^2-d_{uv}^2}]$, and hence, we have $\tilde{d}_{su}=d_{sv}-d_{uv}$. Therefore, we have
\begin{align*}
\int_{\tilde{d}_{su}}^{\tilde{d}_{su}+\frac{\lambda}{2}}  \frac{d|\theta|}{dd_{su}}dd_{su}&=\int_{d_{sv}-d_{uv}}^{d_{sv}-d_{uv}+\frac{\lambda}{2}}  \frac{d|\theta|}{dd_{su}}dd_{su}\\
&=\int_{0}^{|\hat{\theta}|} d|\theta|\\
&=|\hat{\theta}|
\end{align*}
where $|\hat{\theta}|$ is $|\angle{uvs}|$ when $d_{su}=d_{sv}-d_{uv}+\frac{\lambda}{2}$ for given $z_u,z_v,d_{sv}$. We have the following bounds on $\cos|\hat{\theta}|$:
\begin{align}
1-K_{111}'\frac{\lambda D}{y_{uv}L}\leq \cos|\hat{\theta}|\leq 1-K_{112}'|\hat{\theta}|^2 \label{eqn:cos_theta}
\end{align}
for some positive constants $K_{111}'$ and $K_{112}'$.
The upper bound holds since $|\hat{\theta}|\ll \pi$ and the lower bound is obtained as
\begin{align*}
\cos|\hat{\theta}|&=\frac{d_{sv}^2+d_{uv}^2-d_{su}^2}{2d_{sv}d_{uv}}\bigg|_{d_{su}=d_{sv}-d_{uv}+\frac{\lambda}{2}}\\
&\geq 1-\frac{\lambda}{d_{uv}}\\
&=1-\frac{\lambda}{y_{uv}}\sin|\omega|\\
&\overset{(a)}{\geq} 1-K_{111}'\frac{\lambda D}{y_{uv}L}
\end{align*}
where $\omega$ is the angle between the line through $z_u$ and $z_v$ and the horizontal line crossing $z_u$ and $(a)$ is because $(z_u,z_v)\in \Gamma_1$.
From (\ref{eqn:cos_theta}), we have \[\int_{\tilde{d}_{su}}^{\tilde{d}_{su}+\frac{\lambda}{2}}  \frac{d|\theta|}{dd_{su}}dd_{su}=|\hat{\theta}|\leq \sqrt{K_{113}'\frac{\lambda D}{y_{uv}L}}\] for some positive constant $K_{113}'$. Using this bound in (\ref{eqn:mid1}) and (\ref{eqn:mid2}), the upper bounds in (\ref{eqn:gamma_1}) for the case of $(z_u,z_v)\in \Gamma_1$ are obtained.

Now consider the case of $(z_u,z_v)\in \Gamma_2$. The following lemma gives a lower bound on $|\phi|$ for the case of $(z_u, z_v)\in \Gamma_2$, whose proof is given at the end of the present appendix.
\begin{lemma} \label{lemma:phi_lower}
When $(z_u, z_v)\in \Gamma_2$ is given, $|\phi|$ is lower-bounded by $|\phi_m|$.
\end{lemma}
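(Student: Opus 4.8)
The plan is to exploit the classical inscribed-angle theorem to show that, on each side of the line through $z_u$ and $z_v$, the map $s\mapsto|\phi|=\angle vsu$ is quasi-concave, and then to use the elementary fact that a quasi-concave function on a convex polygon attains its minimum at a vertex. The hypothesis $(z_u,z_v)\in\Gamma_2$ will serve precisely to guarantee that $C_R$ sits on a single side of the line, which is what makes the relevant super-level sets convex.

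First I would unwind the hypothesis geometrically. By the definition of $\Gamma_2$, the line $\ell$ through $z_u$ and $z_v$ does not meet $C_R$, so the entire square $C_R$ lies in one of the two open half-planes determined by $\ell$. In particular, no $s\in C_R$ is collinear with $u$ and $v$, so $|\phi|=\angle vsu$ lies in $(0,\pi)$ and is continuous on $C_R$. Next I would describe the level sets of the angle: by the inscribed-angle theorem, for a fixed $\alpha\in(0,\pi)$ the locus of points $s$ on a given side of $\ell$ for which $uv$ subtends the angle $\angle usv=\alpha$ is a circular arc through $u$ and $v$, and points strictly inside the associated disk see $uv$ under a strictly larger angle while points outside see it under a strictly smaller one. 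Consequently, on the half-plane containing $C_R$ the super-level set $\{s:\angle usv\ge\alpha\}$ is the intersection of that disk with the half-plane, namely a circular segment; being an intersection of two convex sets it is convex, so $s\mapsto|\phi|$ is quasi-concave on the half-plane containing $C_R$.

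Finally I would combine quasi-concavity with the convexity of $C_R$. Writing any $s\in C_R$ as a convex combination $s=\sum_{i=1}^4\lambda_i c_i$ of the four corners $c_1,\dots,c_4$ of the square, and setting $\beta=\min_i|\phi(c_i)|$, all four corners lie in the convex super-level set $\{|\phi|\ge\beta\}$, hence so does $s$, giving $|\phi(s)|\ge\beta$. Since $|\phi(c_1)|,\dots,|\phi(c_4)|$ are exactly $|\phi_1|,\dots,|\phi_4|$ (the angles $\angle vsu$ at $(L,0)$, $(L+D,0)$, $(L+D,D)$, $(L,D)$), this yields $|\phi|\ge\min_i|\phi_i|=|\phi_m|$ for every $s\in C_R$, which is the claim.

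The main obstacle I anticipate is the quasi-concavity step, and specifically the need to restrict to the single side of $\ell$ that contains $C_R$: taken over both sides the locus $\{\angle usv\ge\alpha\}$ is a non-convex lens, so the argument would fail. This is exactly where the assumption $(z_u,z_v)\in\Gamma_2$ is indispensable, and I would make sure to state that one-sidedness explicitly before invoking convexity.
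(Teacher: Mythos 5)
Your proof is correct, and it takes a genuinely different route from the paper's. The paper proceeds by coordinate-wise calculus: writing $|\phi|=\arccos \frac{d_{su}^2+d_{sv}^2-d_{uv}^2}{2d_{su}d_{sv}}$ via the law of cosines, it fixes $x_s$, differentiates the argument of $\arccos$ with respect to $y_s$, and observes that the numerator of this derivative is a cubic in $y_s$ with positive leading coefficient, one of whose roots corresponds to $s$ lying on the line through $z_u$ and $z_v$. The hypothesis $(z_u,z_v)\in\Gamma_2$ excludes that root from $\mathcal{S}(y_s)$, which (by the symmetric placement of the other two roots in (\ref{eqn:roots})) leaves at most one sign change of the derivative and forces $\cos|\phi|$ to be maximized, hence $|\phi|$ minimized, at $y_s\in\{0,D\}$; a symmetric sweep in $x_s$ then places the minimizer at a corner. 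You reach the same corner-minimization synthetically: the inscribed-angle theorem makes $s\mapsto|\phi|$ quasi-concave on the half-plane containing $C_R$ (each super-level set is a disk intersected with that half-plane, hence convex), so the minimum over the square, the convex hull of its four corners, is attained at a corner. Both arguments spend the hypothesis $\Gamma_2$ at the decisive step---the paper to discard a root of the cubic, you to guarantee one-sidedness and thus convexity of the super-level sets---but yours is coordinate-free, avoids the polynomial bookkeeping, and generalizes verbatim to any convex polygon in place of the axis-aligned square, to which the paper's computation is tied. One small caution: your claim that points strictly inside the disk see $uv$ under a strictly larger angle is valid only for points on the same side of $\ell$ as the arc (for $\alpha>\pi/2$, points inside the disk but across the chord can subtend an angle smaller than $\alpha$); this costs you nothing, since you form the super-level sets within the half-plane anyway, but the restriction should be stated when you assert that monotonicity.
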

From the above lemma, \[\frac{d|\theta|}{dd_{su}}=\frac{1}{d_{sv}\sin|\phi|}\leq \frac{1}{d_{sv}\sin|\phi_m|}\] and hence,
\begin{align*}
\int_{\tilde{d}_{su}}^{\tilde{d}_{su}+\frac{\lambda}{2}} \frac{d|\theta|}{dd_{su}}dd_{su}&\leq \frac{\lambda}{2d_{sv}\sin|\phi_m|}.
\end{align*}
Using this bound in (\ref{eqn:mid1}) and (\ref{eqn:mid2}), the upper bounds in  (\ref{eqn:gamma_1}) for the case of $(z_u,z_v)\in \Gamma_2$ are proved. \endproof

\subsubsection*{Proof of Lemma \ref{lemma:int_upper}}
It can be easily shown that for any interval $[a_1, b_1]$ on which $h(x)$ is monotonically increasing, we have
\begin{align}
\bigg|\int^{b_1}_{a_1}g(c_1x+c_2)h(x)dx\bigg|\leq \int^{b_1}_{b_1-\frac{p}{2|c_1|}}h(x)dx, \label{eqn:int_upper_inc}
\end{align}
and for any interval $[a_2, b_2]$ on which $h(x)$ is monotonically decreasing,
\begin{align}
\bigg|\int^{b_2}_{a_2}g(c_1x+c_2)h(x)dx\bigg|\leq \int^{a_2+\frac{p}{2|c_1|}}_{a_2}h(x)dx. \label{eqn:int_upper_dec}
\end{align}
From (\ref{eqn:int_upper_inc}) and (\ref{eqn:int_upper_dec}), Lemma \ref{lemma:int_upper} is directly obtained. \endproof

\subsubsection*{Proof of Lemma \ref{lemma:phi_lower}}
Assume that $(z_u, z_v)\in \Gamma_2$ is given. Then, $\mathcal{S}(\phi|z_u,z_v)$ is included in either $[-\pi,0)$ or $(0, \pi]$. $|\phi|$ is given as follows:
\begin{align*}
|\phi|=\arccos \frac{d_{su}^2+d_{sv}^2-d_{uv}^2}{2d_{su}d_{sv}}.
\end{align*}
Fix $x_s$. The derivative of $\frac{d_{su}^2+d_{sv}^2-d_{uv}^2}{2d_{su}d_{sv}}$ with respect to $y_s$ has the form of a rational polynomial $\frac{g_1(y_s)}{g_2(y_s)}$, where $g_2(y_s)$ is positive for every $y_s\in [0,D]$ and $g_1(y_s)$ is a cubic function of $y_s$ with a positive cubic coefficient whose roots are given as (\ref{eqn:roots}).
\begin{figure*}\hrule
\begin{align}
\frac{x_s(y_u-y_v)-x_vy_u+x_uy_v}{x_u-x_v},\frac{x_s(y_u-y_v)-x_vy_u+x_uy_v\pm \left((x_s-x_u)(x_s-x_v)((x_u-x_v)^2+(y_u-y_v)^2)\right)^{\frac{1}{2}}}{x_u-x_v}\label{eqn:roots}
\end{align}\hrule
\end{figure*}
Since $\phi=0$ when $y_s=\frac{x_s(y_u-y_v)-x_vy_u+x_uy_v}{x_u-x_v}$, which violates the assumption $(z_u, z_v)\in\Gamma_2$, $\mathcal{S}(y_s)$ contains at most one root of $g_1(y_s)$. Because the cubic coefficient of $g_1(y_s)$ is positive, $\frac{d_{su}^2+d_{sv}^2-d_{uv}^2}{2d_{su}d_{sv}}$ is maximized when $y_s$ is 0 or $D$, and hence, $|\phi|$ is minimized when $y_s$ is 0 or $D$.

In a similar way, we can show that $|\phi|$ is minimized when $x_s$ is $L$ or $L+D$ for fixed $y_s$. Thus, $|\phi|$ is lower bounded by $|\phi_m|$. \endproof

\section{Extension to a Path-loss Exponent Larger than Two} \label{appendix:path_loss_two}
In this appendix, we consider the channel model with a path-loss exponent larger than two, i.e., the discrete-time baseband-equivalent channel gain (\ref{eqn:H_su}) between nodes $k$ and $i$ at time $m$ is changed to
\begin{align}
H_{ik}[m]&=\frac{\sqrt{G}}{d_{ik}[m]^{\alpha/2}}\exp\left(-j\frac{2\pi}{\lambda}d_{ik}[m]\right)
\label{eqn:H_su_2}
\end{align}
with the path-loss exponent $\alpha>2$. For $\alpha=4$, this channel model approximates the channel when there are a direct path and a reflected path off the ground plane between transmit and receive antennas with a sufficiently large horizontal distance. For $\alpha>2$ and $\alpha\neq 4$, however, the channel model (\ref{eqn:H_su_2}) is not a direct consequence of Maxwell's equations, and hence, the DoF limit characterized in \cite{Franceschetti:09} is not valid for this channel model.

Now, let us present throughput scalings using the modified HC scheme constructed in Section \ref{sec:modifiedHC} for the channel model in (\ref{eqn:H_su_2}). In dense networks, we can get the same throughput scaling in Theorem \ref{thm:dense}. In extended networks, the throughput scaling using the modified HC scheme is decreased because the network becomes power-limited. For the same transmit power, the received power at each node in extended networks is decreased by a factor of $n^{\alpha/2-1}$ as compared to the dense network. By rescaling the space, an extended network can be considered as an equivalent dense network on a unit area but with the average power constraint per node reduced to $P/n^{\alpha/2-1}$ instead of $P$ and the wavelength reduced to $\lambda n^{-1/2}$ instead of $\lambda$. Note that the average power constraint $P/n^{\alpha/2-1}$ per node is less than $P$. As the bursty modification of the HC scheme in \cite{Ozgur:07}, we use the bursty version of the modified HC scheme, i.e., we use the modified HC scheme with operating power $P$ for $n^{1-\alpha/2}$ fraction of the time and keep silent for the remaining fraction of the time. This satisfies the average power constraint per node $P/n^{\alpha/2-1}$ and yields an aggregate throughput scaling of $n^{1-\frac{\alpha}{2}}\min\left\{\sqrt{n}\lambda^{-1},n\right\}^{1-\epsilon}$.

\bibliographystyle{IEEEtran}

\begin{IEEEbiographynophoto}{Si-Hyeon Lee} (S'08) received the B.S. degree, summa cum laude, in electrical engineering from the Korea Advanced Institute of Science and Technology (KAIST), Daejeon, South Korea, in 2007. She is currently pursuing the Ph.D. degree in electrical engineering at KAIST. Her research interests include network information theory and wireless communication systems.
\end{IEEEbiographynophoto}

\begin{IEEEbiographynophoto}{Sae-Young Chung} (S'89-M'00-SM'07) received the B.S. (summa cum laude) and M.S. degrees in electrical engineering from Seoul National University, Seoul, South Korea, in 1990 and 1992, respectively and the Ph.D. degree in electrical engineering and computer science from the Massachusetts Institute of Technology, Cambridge, MA, USA, in 2000. From September 2000 to December 2004, he was with Airvana, Inc., Chelmsford, MA, USA. Since January 2005, he has been with the Department of Electrical Engineering, Korea Advanced Institute of Science and Technology, Daejeon, South Korea, where he is currently a KAIST Chair Professor. He has served as an Associate Editor of the IEEE Transactions on Communications since 2009. He is the Technical Program Co-Chair of the 2014 IEEE International Symposium on Information Theory. His research interests include network information theory, coding theory, and wireless communications.
\end{IEEEbiographynophoto}
\end{document}